\newtheorem{theorem}{Theorem}
\newtheorem{lemma}{Lemma}
\newtheorem{definition}{Definition}
\newtheorem{corollary}{Corollary}
\newtheorem{proposition}{Proposition}
\theoremstyle{definition}
\newtheorem{remark}{Remark}
\theoremstyle{definition}
\newcommand{\mc}{\mathcal}
\newcommand{\allanswers}{A_{[0:N-1]}}
\newcommand{\allqueries}{Q_{[0:N-1]}}
\newcommand{\allfiles}{W_{[0:K-1]}}
\newcommand{\allqueriesinstantiation}{q_{[0:N-1]}}
\newcommand{\expec}{{\mathbb E}}
\newcommand{\length}{\mathscr{L}}
\def \extended {1}
\begin{document}
\title{On the Optimal Message Size in PIR Under Arbitrary Collusion Patterns} 


\author{%
  \IEEEauthorblockN{Guru S. Dornadula*, Manikya Pant*, Gowtham R. Kurri, Prasad Krishnan}
  \thanks{* Equal Contribution}
  \IEEEauthorblockA{International Institute of Information Technology, Hyderabad \\
                    Hyderabad, India.\\ 
                    Email: \{guru.dornadula@research., manikya.pant@research., gowtham.kurri@, prasad.krishnan@\}iiit.ac.in}
}

\maketitle

\begin{abstract}
	A private information retrieval protocol (PIR) scheme under an arbitrary collusion pattern $\mc{P}$ enables a client to retrieve one message from a library of $K$ equal-sized messages duplicated in $N$ servers, while keeping the index of the desired message private from any colluding set in $\mc{P}$. The efficiency of a PIR protocol is measured by its rate, defined as the ratio of the message size to the total download cost, and the supremum of all the achievable rates is the capacity of PIR. Although achieving high rates typically requires sufficiently large message sizes, smaller message sizes also desirable due to reduced implementation complexity and fewer constraints. By characterizing the capacity-achieving schemes, Tian, Sun, and Chen (2019) showed that the optimal message size for uniformly decomposable PIR schemes under no-collusion setting is $N-1$. However, comparable results are not yet available for more general collusion settings.

    In this work, we present a complete characterization of the properties of capacity-achieving decomposable PIR schemes under arbitrary collusion patterns. Building on this characterization, we derive a general lower bound on the optimal message size for capacity-achieving uniformly decomposable PIR schemes under an arbitrary collusion pattern $\mc{P}$, expressed in terms of the hitting number of a newly defined family of subsets of servers determined by the collusion pattern $\mc{P}$. Finally, we specialize the lower bound to several important classes of collusion patterns, including $T$-collusion, disjoint collections of colluding sets, cyclically $T$-contiguous collusion, and disjoint collections of cyclically contiguous colluding sets. For the last two collusion patterns, we present matching achievable schemes that attain the corresponding bounds, thereby providing a complete characterization of the optimal message size.    
\end{abstract}

\section{Introduction}
A private information retrieval (PIR) system under arbitrary collusion pattern $\mathcal{P} = \{\mathcal{T}_1, \mathcal{T}_2, \ldots, \mathcal{T}_M\}$ consists of $N$ servers hosting a library of equal-sized $K$ messages. A client in this system seeks to download one message from the library without revealing the identity of the desired message to any colluding set $\mc{T}_m\in\mc{P}$. A query-response protocol executed in this system between the client and the servers is known as a PIR scheme. The central metric to determine the performance of a PIR scheme is its \textit{rate}, which is defined as the ratio of the size of the message (in bits) to the expected total quantum of download from the servers (in bits). The supremum of all the achievable rates is the capacity of PIR. The capacity of PIR under information-theoretic privacy has been characterized in several settings through matching achievability and converse results, including the no-collusion setting~\cite{SunJ17}, the $T$-collusion setting where any $T$ servers may collude~\cite{SunJ18}, and the general setting of arbitrary collusion patterns $\mathcal{P} = \{\mathcal{T}_1, \mathcal{T}_2, \ldots, \mathcal{T}_M\}$~\cite{YaoLK21}.

Typically, in any PIR protocol, we write each message as an $L$-length tuple over some alphabet ${\cal X}$, where $L$ is known as the \textit{sub-packetization} parameter and ${\cal X}$ is the \textit{message alphabet}. The \textit{message-size} of a PIR protocol is then the product $L\log|{\cal X}|$. For instance, the capacity-achieving scheme presented in \cite{SunJ17} requires the message-size to be $N^K\log|{\cal X}|$ bits, with $|{\cal X}|=2$ being sufficient. For large values of $N$ and $K$, this message-size becomes impractically large. PIR schemes with small message-sizes are preferable, as a small message-size implies lower complexity of implementation, as well as the increased versatility. Characterizing the minimum message-size for PIR and designing schemes with such optimal message-size has been an ongoing effort in research on PIR. In \cite{sjARBITRARYmessagelengthNK-1}, it was shown that, for the non-collusion setting, the message-size required for any capacity-achieving scheme is $N^{K-1}$, under the condition that the maximum download for any realization of the server-queries cannot exceed the ratio of the message-size to capacity. A capacity-achieving scheme with message-size $N^{K-1}$ was also shown in \cite{sjARBITRARYmessagelengthNK-1}. 

The capacity-achieving scheme for $T$-collusion PIR ($T$-PIR with $2\leq T\leq N$) in \cite{SunJ18} required the sub-packetization $L=N^K$, with $|{\cal X}|=\Omega(N^2T^{M-2})$. Subsequently, it was shown in \cite{zhang_TIT2019_optimalSubpack_LinearPIRCollusion} that a sub-packetization $L=\mathsf{gcd}(N,T)\left(\frac{N}{\mathsf{gcd}(N,T)}\right)^{K-1}$ is necessary, for any capacity-achieving $T$-collusion PIR scheme satisfying some special linearity conditions and under the constraint that the total size of the responses from the servers remains the same for all queries. Following this, the works \cite{XuZhang_2018_ISIT_OptimalSubpackO(N)fieldsize_Tcoll,Xuetal_2019_CapAchievingTPIRschemeusingMDSarraycodes,Xuetal_2022_TPIR_Binaryfield_SubfieldSubcodes} have presented capacity-achieving PIR schemes with this small sub-packetization.

In \cite{TianSC19}, Tian, Sun and Chen presented a novel capacity-achieving scheme (the \textit{TSC scheme}) for the no-collusion setting. Remarkably, the TSC scheme uses a message-size of only $N-1$. This reduction is enabled by two key relaxations: (a) the scheme allows different total download sizes across different query realizations, and (b) the PIR rate is defined using the \textit{expected} download size rather than the maximum download size, as in \cite{sjARBITRARYmessagelengthNK-1}. Under these relaxations, it was shown in \cite{TianSC19} that $N-1$ is the optimal message-size among all capacity-achieving schemes within the class \textit{uniformly} decomposable schemes. The class of decomposable schemes is rich, subsuming linear schemes, and indeed almost all known PIR schems are linear. The lower bound proof in \cite{TianSC19} relies on a complete characterization of capacity-achieving decomposable schemes for PIR under no-collusion, obtained through a careful analysis of the conditions under which the inequalities appearing in the converse proof of \cite{SunJ17} hold with equality. 

We extend the aforementioned the line of work by focusing on the optimal message size in PIR under arbitrary collusion patterns for replicated servers. In particular, our main contributions are as follows.
\begin{itemize}[leftmargin=*]
\item We completely characterize the capacity-achieving decomposable schemes for PIR under arbitrary collusion patterns by identifying the necessary and sufficient conditions on their algebraic structure (Theorem~\ref{thm:characterizing-properties}). This characterization is obtained through a refined analysis of the converse for the PIR capacity under arbitrary collusion~\cite{YaoLK21}, leveraging and extending similar ideas previously developed for the no-collusion setting in \cite{TianSC19}. Under appropriate specializations, these conditions recover the characterizations obtained in \cite[Theorem~2]{MikiMM24} and \cite[Theorem~5]{miki2025necessary}.
\item By exploiting the algebraic structure induced by those conditions in a non-trivial manner, we then derive a general lower bound on the optimal message size for capacity-achieving uniformly decomposable PIR schemes under an arbitrary collusion pattern $\mc{P}$. This lower bound is expressed in terms of the hitting number of a newly defined family of subsets of servers determined by the collusion pattern $\mc{P}$ (Theorem~\ref{thm:minimum-message size}).
\item Finally, we specialize the lower bound to several classes of collusion patterns, including $T$-collusion~\cite{SunJ18}, disjoint collections of colluding sets~\cite{JiaSJ17}, cyclically $T$-contiguous collusion~\cite{SunJ18}, and disjoint collections of cyclically contiguous colluding sets. For the last two settings, we present matching achievable schemes that attain the corresponding bounds, thereby providing a complete characterization of the optimal message size under these collusion patterns.
\end{itemize}

\section{System Model and Preliminaries}
\textit{Notation.} For integers $s\leq t$ we denote the set  $\{s,s+1,\hdots, t\}$ as $[s:t]$.   
Let $\cal S$ denote a set indexing items $X_i:i\in \cal S$. For a subset $\cal T\subseteq \cal S$,  we denote the subset of items $X_i:i\in \cal T$ as $X_{\cal T}$. 

{The scheme considered in~\cite{TianSC19} and the ones used in this paper for special cases of T-contiguous and disjoint T-contiguous are not balanced. The minimum message size for PIR under T-collusion was shown in~\cite{KruglikKDW26} was shown to be N-T but for a balanced scheme }

As in \cite{SunJ17,TianSC19}, we consider a system consisting of $N$ replicated servers indexed by $[0:N-1]$, each containing $K$ files or messages denoted by $W_i:i\in[0:K-1]$. We assume $W_i:i\in[0:K-1]$ are independent random variables and uniformly distributed over ${\cal X}^L$, for some alphabet ${\cal X}$, where $L$ is a positive integer. Thus, we have $H(\allfiles)=\sum_{i=0}^{K-1}H(W_i)=KL\log|{\cal X}|$. We refer to the term $L\log|{\cal X}|$ as the \textit{message size}, as in \cite{TianSC19}. We consider PIR under an arbitrary collusion pattern $\mathcal{P} = \{\mathcal{T}_1, \mathcal{T}_2, \ldots, \mathcal{T}_M\}$, where each $\mathcal{T}_m\subseteq [0:N-1]$, $m\in[1:M]$, denotes a maximal set of servers that may collude. We make the following assumptions on the collusion pattern $\mathcal{P}$ without loss of generality:
\begin{enumerate}[leftmargin=*]
    \item If $\mathcal{T}\subseteq \mathcal{P}$, then every subset of $\mathcal{T}$ is also a colluding set.
    \item Every server index appears in at least one element of $\mathcal{P}$.
    \item No two distinct server indices $i,j\in[0:N-1]$ always appear together in all elements of $\mathcal{P}$; otherwise such indices can be merged and treated as a single server.
\end{enumerate}

The client wishes to retrieve the message $W_k$ from the servers while keeping the index $k$ private from any colluding set of servers specified by $\mathcal{P}$.

Formally, a PIR scheme under the collusion pattern $\mathcal{P}$ consists of the following steps:

\begin{itemize}[leftmargin=*]
	\item \textit{Query Generation:} The client possesses a private random key (unknown to servers), denoted by $F$ which takes values in a set ${\cal F}$ according to some distribution. The client also possesses collections ${\cal Q}_n:n\in[0:N-1]$, where ${\cal Q}_n$ refers to the set of all possible queries to server $n$. For $n\in[0:N-1]$, the client generates query $Q_n^{[k]}:n\in[0:N-1]$ as $Q_n^{[k]}\triangleq \phi_n(k,F)$, where $\phi_n$ is a deterministic function which takes the desired file index $k$ and the private key $F$ as its input, given as $~\phi_n:[0:K-1]\times {\cal F}\to {\cal Q}_n.$ Query $Q_n^{[k]}$ is then sent to server $n$, for $n\in[0:N-1]$.

	\item \textit{Responses from Servers:} For each $n\in[0:N-1]$, the server generates the response-length $\length_n$ as a deterministic function (taking non-negative values) of the received query $Q_n^{[k]}\in{\cal Q}_n$, i.e., $\length_n\triangleq\ell_n(Q_n^{[k]})$. The server $n$ then transmits a response $A_n^{[k]}\triangleq \varphi_n(Q_n^{[k]},\allfiles)$ of length $\length_n$ over the coded-symbol alphabet ${\cal Y}$, where $\varphi$ is some deterministic function given as 
	      $
	      ~\varphi_n: {\cal Q}_n\times {\cal X}^{KL}\to {\cal Y}^{\length_n}.
	      $
	      
	\item Then, the client retrieves the desired file $W_k$ satisfying \emph{correctness} and \emph{$\mathcal{P}$-privacy} using some reconstruction function applied to the received responses  $A_n^{[k]}:n\in[0:N-1]$, $F$, and $k$. Formally, correctness requires $H(W_k|\allanswers^{[k]},\allqueries^{[k]})=0$, and the privacy condition is
	      \begin{align}
	      P_{Q_{\mathcal{T}_{m}}^{[k]}}(q_{\mathcal{T}_{m}})
	      = P_{Q_{\mathcal{T}_{m}}^{[k']}}(q_{\mathcal{T}_{m}}),\;
	      \forall\, k,k' \in [0:K-1], 
        \end{align}
 for each $\mathcal{T}_{m} \in \mathcal{P}$.       
\end{itemize}
The rate of a information retrieval scheme is defined as the ratio $\frac{L\log|{\cal X}|}{\log{|\cal Y}|\cdot\expec(\sum_{n=0}^{N-1}\length_n)}.$ A rate $R$ is said to be achievable if there exists retrieval protocol that satisfies the decoding and privacy constraint.

The capacity of PIR under collusion pattern $\mathcal{P}$ was derived in \cite{YaoLK21} as 
\begin{align}\label{eqn:capacity-arbitrary}
	C_{\mc{P}}
	=
	\left(
	1 + \frac{1}{S^{\ast}}
	+ \left(\frac{1}{S^{\ast}}\right)^{2}
	+ \cdots
	+ \left(\frac{1}{S^{\ast}}\right)^{K-1}
	\right)^{-1},
\end{align}
where \(S^{\ast}\) is the optimal value of the following linear
programming problem:
\begin{align}
	\max_{\mathbf{y}} \quad & \mathbf{1}_{N}^{T} \mathbf{y} \label{eq:LP1}       \\\   
	\text{subject to} \quad & \mathbf{B}^{T}_{\cal{P}} y \le \mathbf{1}_{M}, \nonumber \\
	                        & \mathbf{y} \ge \mathbf{0}_{N}, \nonumber           
\end{align}
where \(\mathbf{B}_{\cal{P}}\) is the \(N \times M\) incidence matrix representing the collusion pattern $\mathcal{P} = \{\mathcal{T}_1, \mathcal{T}_2, \ldots, \mathcal{T}_M\}$. Specifically, if server \(n\in[0:N-1]\) belongs to the 
$\mathcal{T}_m$, $m\in[1:M]$, then the \((n,m)\)-th entry of 
\(\mathbf{B}_{P}\) is \(1\); otherwise, it is \(0\).

We recall key definitions from \cite{TianSC19} for a general class of PIR scheme (adapted for PIR with collusion pattern $\mathcal{P}$) as we need them for our lower bounds on the minimum message size.

\begin{definition}[\!\!{\cite[Definitions~2 and 3]{TianSC19}}]
	\label{defn:decomposabilityanduniform}
	A PIR with collusion pattern $\mathcal{P}$ scheme is called \textit{decomposable}, if ${\cal Y}$ is a finite abelian group (with operation $\oplus$), and for each $n\in[0:N-1]$ and $q\in{\cal Q}_n$, the response function $\varphi_n(q, W_{[0:K -1]})$ can be written as
	{\small \begin{align*}&\varphi_n(q,W_{[0:K -1]})\\&=\left(
		\varphi^{(q)}_{n,0}(W_{[0:K -1]}),\varphi^{(q)}_{n,1}(W_{[0:K -1]}),\hdots,\varphi^{(q)}_{n,\ell_n(q)-1}(W_{[0:K -1]})\right),
		\end{align*}}
	where
	$\varphi^{(q)}_{n,i}(\allfiles)
	= \varphi^{(q)}_{n,i,0}(W_0)\oplus\varphi^{(q)}_{n,i,1}(W_1)\oplus\cdots\oplus \varphi^{(q)}_{n,i,K-1}(W_{K-1}),\forall i\in[0:\ell_n(q)-1]$ 
	where each
	$\varphi^{(q)}_{n,i,k}$ is a mapping ${\cal X}^L \to {\cal Y}$. Further, a \textit{uniformly decomposable} PIR scheme is a decomposable scheme where $\varphi^{(q)}_{n,i,k}(W_k), \forall n,i,k$, either have uniform distribution over ${\cal Y}$ or are deterministic.
\end{definition}

The work \cite{TianSC19} also provided a simple way of denoting the action of the response function of each server $n$ as a matrix-vector pseudo-product, which we now recall. These notations will be used throughout this work. We denote $\allfiles\cdot G_n^{(q)}\triangleq \varphi_n(q,\allfiles),$  where $\allfiles$ is viewed as a $K$-length vector over alphabet ${\cal X}^L$, and $G_n^{(q)}$ is viewed as a matrix of size $K\times \ell_n(q)$ whose elements are $G^{(q)}_{n,i,k}$ represent the component functions, written as: 
$W_k\cdot G^{(q)}_{n,i,k}\triangleq \varphi^{(q)}_{n,i,k}(W_k)$. Also,  $G_{n|\mathcal{A}}^{(q)}$ is used to denote a submatrix of $G_n^{(q)}$, with the rows corresponding to the subset $\mathcal{A}$ of the messages removed.

We need the following fractional version of Shearer's lemma~\cite{MadimanT10}, together with its necessary and sufficient conditions for equality~\cite{JakharKCP25}, to characterize capacity-achieving schemes under arbitrary collusion patterns.
	
	
	\begin{proposition}[{\!\!
				\cite[Proposition~II]{MadimanT10}
				\cite[Corollary~2]{JakharKCP25}}]\label{shearers lemma}
					
					Let $ \alpha: \mathcal{P}\to \mathbb{Q}_{+}$ be any fractional covering with respect to a family $\mathcal{P}$ of subsets of $[1:n] , i.e,\sum \limits_{P \in \mathcal{P} : i\in \mathcal{P}} \alpha(P) \geq 1 , \forall i$. For jointly distributed random variables $X_{1}, \dots , X_{n}$, 
					\begin{align}\label{eqn:fractional covering}
						H(X_{[1:N]}) \leq \sum \limits_{P \in \mathcal{P}} \alpha(P) H(X_{P}), 
					\end{align}
					 with equality if and only if $X_{i}$'s  for  $i$ such that $\sum\limits_{P \in \mathcal{P}: i\in \mathcal{P}} \alpha(P)= 1$ are mutually independent, and $X_{i}$  for $i$  such that $\sum\limits_{P \in \mathcal{P}: i\in \mathcal{P}} \alpha(P) > 1$  are constants.
	\end{proposition}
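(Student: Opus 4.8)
The plan is to prove the stated inequality by the classical chain-rule argument after fixing an arbitrary linear order on the coordinates, and then to read off the equality conditions by recording which of the inequality steps must be tight, crucially using that ``equality of the two sides'' is a statement not referring to any order.

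\emph{The inequality.} Fix a linear order on $[1:n]$, relabelled as $1<2<\cdots<n$. For each $P\in\mathcal{P}$, the chain rule inside $P$ followed by dropping conditioning gives $H(X_P)=\sum_{i\in P}H(X_i\mid X_{P\cap[1:i-1]})\geq\sum_{i\in P}H(X_i\mid X_{[1:i-1]})$, since $P\cap[1:i-1]\subseteq[1:i-1]$ and conditioning does not increase entropy. Multiplying by $\alpha(P)\geq0$, summing over $P$, and interchanging the summations,
\begin{align*}
\sum_{P\in\mathcal{P}}\alpha(P)H(X_P)\;\geq\;\sum_{i=1}^{n}\Big(\sum_{P\in\mathcal{P}:\,i\in P}\alpha(P)\Big)H(X_i\mid X_{[1:i-1]})\;\geq\;\sum_{i=1}^{n}H(X_i\mid X_{[1:i-1]})\;=\;H(X_{[1:n]}),
\end{align*}
where the second inequality uses the covering hypothesis together with $H(X_i\mid X_{[1:i-1]})\geq0$, and the last equality is the chain rule.

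\emph{Equality.} Suppose equality holds; being order-free it then holds for every linear order, and I exploit several. For a fixed order, equality forces (a) $H(X_i\mid X_{P\cap[1:i-1]})=H(X_i\mid X_{[1:i-1]})$ for every $P\in\mathcal{P}$ with $i\in P$, and (b) $H(X_i\mid X_{[1:i-1]})=0$ whenever $\sum_{P\ni i}\alpha(P)>1$. Applying (b) with an order placing a slack index $i$ first gives $H(X_i)=0$, so every slack coordinate is constant; deleting these constants from every member of $\mathcal{P}$ changes neither side of the inequality nor the coverage values of the remaining (tight) coordinates, so we may assume all coordinates are tight and must prove $X_1,\dots,X_n$ mutually independent. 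Fix a tight $i$ and a set $B\subseteq[1:n]\setminus\{i\}$ and use an order placing $B$ immediately before $i$: by (a), $H(X_i\mid X_{P\cap B})=H(X_i\mid X_B)$ for every $P\ni i$. Choosing $P\ni i$ with $P\cap B=\emptyset$ — possible by the structural assumptions on $\mathcal{P}$ (under downward closure and coverage one may simply take $P=\{i\}$) — yields $H(X_i\mid X_B)=H(X_i)$ for all such $B$, so $X_i$ is independent of the other coordinates; ranging over $i$ and using the chain rule gives mutual independence. Conversely, if slack coordinates are constant and tight coordinates mutually independent, then $H(X_P)=\sum_{i\in P:\,i\text{ tight}}H(X_i)$ for each $P$, so
\begin{align*}
\sum_{P\in\mathcal{P}}\alpha(P)H(X_P)=\sum_{i\text{ tight}}\Big(\sum_{P\in\mathcal{P}:\,i\in P}\alpha(P)\Big)H(X_i)=\sum_{i\text{ tight}}H(X_i)=H(X_{[1:n]}),
\end{align*}
recovering equality.

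\emph{Main obstacle.} Every step except the reduction of condition (a) to ``$X_i$ independent of the rest'' is mechanical. That reduction is the crux: one must guarantee, for each tight $i$ and each $B\subseteq[1:n]\setminus\{i\}$, a member of $\mathcal{P}$ that contains $i$ but is disjoint from $B$; for a general family this needs an iterative shrinking of $B$ together with a separation hypothesis, whereas under the normalizations imposed on the collusion pattern it is immediate because singletons are themselves colluding sets. (Alternatively, one may invoke the equality characterization for Shearer-type inequalities from the cited work and specialize.)
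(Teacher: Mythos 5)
The paper does not prove this proposition at all --- it is cited from \cite{MadimanT10} and \cite{JakharKCP25} --- so there is no in-paper argument to compare against. Your chain-rule proof of the inequality itself is standard and correct, and your treatment of the slack indices (reorder so a slack $i$ comes first, read off $H(X_i)=0$) is also fine. The gap is exactly the one you flag at the end: to conclude that a tight coordinate $X_i$ is independent of the rest, you need, for each $B\subseteq[1:n]\setminus\{i\}$, a set $P\in\mathcal{P}$ with $i\in P$ and $P\cap B=\emptyset$, or an iterated intersection of sets containing $i$ that shrinks $B$ to $\emptyset$. Your fallback --- that singletons are colluding sets --- does not rescue this: the proposition is stated for an arbitrary family and arbitrary fractional covering, and even in the paper's own use inside Lemma~\ref{lemma:iterative-arbitrary-pir}, the family is the support of the LP solution $x^*$, which after the stated WLOG reduction (drop all $m$ with $x^*_m=0$) need not contain any singletons.

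More importantly, the gap cannot be filled for a general family, because the equality characterization as written is false without a separation hypothesis. Take $n=3$, $\mathcal{P}=\{\{1,2\},\{3\}\}$, $\alpha\equiv 1$; every index is tight. The inequality reads $H(X_{[1:3]})\le H(X_{1,2})+H(X_3)$, which holds with equality whenever $(X_1,X_2)\perp X_3$, with no constraint between $X_1$ and $X_2$; taking $X_1=X_2$ non-constant and $X_3$ independent of them gives equality without mutual independence. What makes your iterative shrinking terminate is the hypothesis $\bigcap_{P\in\mathcal{P}:\,i\in P}P=\{i\}$ for every tight $i$; this is related to --- but strictly stronger than --- the paper's Assumption~3 on collusion patterns (which only forbids two indices from \emph{always} co-occurring, a symmetric condition), and it appears neither in the proposition's statement nor in your sketch. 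You should either add that hypothesis explicitly and carry the iterative-shrinking argument to completion, or defer to \cite{JakharKCP25} for the precise statement and proof, but as written the ``only if'' direction of your argument does not go through.
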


\section{The Characterizing Properties of Capacity-Achieving PIR Schemes Under Arbitrary Collusion}

In this section, we characterize the necessary and sufficient conditions for capacity-achieving decomposable PIR schemes under arbitrary collusion pattern $\mc{P}$, by refining the converse proof given in \cite{YaoLK21} and focusing on the tightness of the underlying inequalities. This approach generalizes that of \cite[Section~IV-A]{TianSC19}, which obtained necessary conditions for capacity-achieving decomposable PIR schemes in the no-collusion setting.

Consider a collusion pattern $\mathcal{P} = \{\mathcal{T}_1, \mathcal{T}_2, \ldots, \mathcal{T}_M\}$. Let $\mathbf{x}^{*}$ be an optimal solution to the following linear program:
		\begin{align}
			\min_{\mathbf{x}} \quad & \mathbf{1}_{M}^{T}\mathbf{x}    \label{eqn:LPinx}                           \\
			\text{subject to} \quad 
			                        & \mathbf{B}_{\mathcal{P}}\mathbf{x} \ge \mathbf{1}_{N},\nonumber \\
			                        & \mathbf{x} \ge \mathbf{0}_{M}\nonumber.                             
		\end{align}
The optimal value of this linear program is equal to $S^*$, i.e., the optimal value of the linear program in \eqref{eq:LP1}~\cite{YaoLK21}. If $x^*_m=0$ for $m\in\mathcal{M}_\phi$, then the capacity of this collusion pattern $\mc{P}$ is same as that of the reduced pattern obtained by removing $\mathcal{T}_m$ for all $m\in\mathcal{M}_\phi$, i.e., $\mathcal{P}'=\{\mc{T}_m: \mc{T}_m\in\mc{P}, m\notin\mathcal{M}_\phi\}$. Therefore, without loss of generality, we assume throughout this paper that for a given collusion pattern $\mathcal{P} = \{\mathcal{T}_1, \mathcal{T}_2, \ldots, \mathcal{T}_M\}$, there exists an optimal solution $x^*$ for the linear program in \eqref{eqn:LPinx} such that $x^*_m>0$ for all $m\in[1:M]$.

\begin{theorem}\label{thm:characterizing-properties}
	Any decomposable PIR scheme under collusion pattern $\mathcal{P} = \{\mathcal{T}_1, \mathcal{T}_2, \ldots, \mathcal{T}_M\}$ achieves capacity $C_\mc{P}$ in \eqref{eqn:capacity-arbitrary} if and only if it satisfies the properties {\bf{P1}}-{\bf{P4}} below.  
	    
	Consider a query set $\allqueriesinstantiation=(q_0,q_1,\dots,q_{N-1})$ such that $P(\allqueries^{[k]}=\allqueriesinstantiation)>0$, with the corresponding answers $A_0^{(q_0)},A_{1}^{(q_1)},\dots,A_{N-1}^{(q_{N-1})}$ and the answer coding matrices $G_0^{(q_0)},G_1^{q_1},\dots,G_{N-1}^{(q_{N-1})}$.\\

	\begin{enumerate}[label=\bf{P\arabic*.},start=1]
		
		\item \underline{Independence of the server responses:} The $N$ random variables ${A}_{0}^{(q_{0})}$, ${A}_{1}^{(q_{1})}$, .... ,${A}_{N-1}^{(q_{N-1})}$
		      are independent.
		\item \underline{Uniform distribution of the server responses: } \\
         $A_n^{(q_n)}$ is uniformly distributed over $\mathcal{Y}^{\ell_n(q_n)}$,  where  $\ell_{n}(q_n)$ is the length of the answer from the $n^{th}$ server, i.e., $H\!\left(A_n^{(q_n)}\right)=\ell_n(q_n)\,\log |\mathcal{Y}|$.
		\item \underline{Identical information for the residuals:}
		      For each $\mc{T}_m\in\mc{P}$, the $N$ random variables 
              \begin{align}
              R_n\triangleq W_{0:k-1,\,k+1:K-1} \cdot G^{(q_n)}_{n|k}, n\in[0:N-1],
              \end{align}
		      are deterministic functions of $R_{\mc{T}_m}=(R_i: i\in\mc{T}_m)$,

		\item \underline{ Independence of the requested message signals:} \\
		      The N random variables 
              \begin{align}
              W_k \cdot G^{(q_n)}_{n|0:k-1,\,k+1:K-1}, n\in[0:N-1],
              \end{align}
		      are independent.
		      
		      \
	\end{enumerate}
\end{theorem}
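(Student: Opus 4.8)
The theorem is an "if and only if" characterization, and the natural route is to revisit the converse chain of \cite{YaoLK21} for the capacity $C_{\mc P}$ and track exactly where each inequality is used, asking when it is tight. Schematically, the converse bounds $H(W_k)$ from below in terms of the download cost via a recursive argument: one shows $H(A_{[0:N-1]}^{[k]} \mid W_k) \ge \tfrac{1}{S^*} H(W_{k'})$ for $k'\ne k$ (and iterates this over a chain of message indices to produce the geometric sum), together with the identity-type step $H(A_{[0:N-1]}^{[k]}) = H(W_k) + H(A_{[0:N-1]}^{[k]}\mid W_k)$ coming from correctness. The plan is: (i) for the "only if" direction, assume the scheme is decomposable and capacity-achieving, so every inequality in this chain is met with equality, and read off P1--P4 as the equality conditions; (ii) for the "if" direction, assume P1--P4 and run the same chain forward, checking that each step now holds with equality so the rate equals $C_{\mc P}$.

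The key equality conditions come from three sources. First, the step where the total download $\sum_n \ell_n(q_n)\log|\mc Y| \ge H(A_{[0:N-1]}^{(q_{[0:N-1]})})$ is used: equality here forces each $A_n^{(q_n)}$ to be uniform on $\mc Y^{\ell_n(q_n)}$, which is exactly \textbf{P2}; and the subadditivity step $H(A_{[0:N-1]}) \le \sum_n H(A_n)$ being tight forces independence of the responses, which is \textbf{P1}. Second --- and this is where the collusion pattern enters and the argument departs from \cite{TianSC19} --- the converse of \cite{YaoLK21} invokes the privacy constraint over each colluding set $\mc T_m$ together with a fractional-cover / Han-type inequality governed by the LP \eqref{eqn:LPinx}; applying Proposition~\ref{shearers lemma} (fractional Shearer with its equality conditions from \cite{JakharKCP25}) to the residual variables $R_n = W_{0:k-1,k+1:K-1}\cdot G^{(q_n)}_{n|k}$ with the fractional covering $\alpha = \mathbf{x}^*$ gives $H(R_{[0:N-1]}) \le \sum_m x_m^* H(R_{\mc T_m})$; since we have reduced to the case $x_m^*>0$ for all $m$, equality forces (by the "tight" case of Proposition~\ref{shearers lemma}) that every $R_n$ with cover-sum exactly $1$ is independent across such indices and that $R_n$ with cover-sum $>1$ is constant — and combining this with the structural fact that privacy makes the joint distribution of $R_{\mc T_m}$ independent of $k$, one extracts that each $R_n$ is a deterministic function of $R_{\mc T_m}$ for every $m$ containing $n$. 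That is \textbf{P3}. Third, the decomposition of each response as a sum of per-message component functions (decomposability) lets one split $A_n^{(q_n)}$ into the $W_k$-part and the residual part; the interference-alignment-style inequality $H(W_k \mid \text{residuals}) = 0$ combined with P1 then forces the requested-message signals $W_k\cdot G^{(q_n)}_{n|0:k-1,k+1:K-1}$ to be mutually independent, i.e. \textbf{P4}.

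Concretely I would proceed in this order: (1) set up the single-query-realization version of the converse (condition on $\allqueries^{[k]}=\allqueriesinstantiation$), using decomposability to write $A_n^{(q_n)} = (W_k\cdot G^{(q_n)}_{n|\bar k}) \oplus R_n$ in the abelian group $\mc Y^{\ell_n}$; (2) derive the download $\ge$ entropy $\ge$ mutual-information inequalities and isolate the two "easy" equality conditions P1 and P2; (3) handle the recursive/cross-message step: use privacy over $\mc T_m$ to transfer from the $k$-world to a $k'$-world, invoke Proposition~\ref{shearers lemma} with cover $\mathbf x^*$ to bound $H(R_{[0:N-1]})$, and peel off P3 from its equality case; (4) feed P3 back into the correctness constraint and P1 to obtain P4; (5) reverse every implication — assuming P1--P4, verify the chain collapses to equalities and the geometric sum \eqref{eqn:capacity-arbitrary} is exactly achieved; (6) remark that specializing $\mc P$ to the no-collusion pattern (all singletons, $\mathbf x^* = \mathbf 1$) recovers \cite[Section~IV-A]{TianSC19}, and to the $T$-collusion / partition patterns recovers \cite[Theorem~2]{MikiMM24} and \cite[Theorem~5]{miki2025necessary}, which serves as a sanity check.

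**Main obstacle.** The delicate point is step (3): matching the fractional-cover structure of the LP \eqref{eqn:LPinx} to the information-theoretic residual variables and showing that the equality case of Proposition~\ref{shearers lemma} — stated for a fixed fractional covering — yields the clean "deterministic function of $R_{\mc T_m}$" statement P3 for \emph{every} $m$ simultaneously, not just for one optimal cover. This requires using the freedom to choose among optimal solutions $\mathbf x^*$ (and the reduction $x_m^*>0$ for all $m$) carefully, and correctly importing the privacy constraint — which is a statement about query distributions, not directly about the $R_n$ — into a statement about the joint law of $(R_n)$. In the no-collusion case this step is comparatively transparent because the cover is trivial; under arbitrary $\mc P$ one must argue that the LP optimality is exactly what makes the Shearer bound tight at capacity, and that any slack in the combinatorics of $\mc P$ would strictly lower the rate. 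Getting this correspondence airtight, and in particular ruling out pathological ties in the LP, is the crux of the proof.
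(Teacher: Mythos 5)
Your high-level plan is correct: condition on a query realization, walk through the converse chain of~\cite{YaoLK21}, and extract the equality conditions for each inequality. Your reading of \textbf{P1} and \textbf{P2} as the conditions for the "download $\geq$ entropy" and subadditivity steps is also right, and correctly does not use privacy (this is exactly \cite[Lemma~1]{TianSC19}, transplanted). But you have the two remaining equality conditions attached to the wrong inequalities, and this is not cosmetic.

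In the paper's chain, \textbf{P3} comes from the equality case of the \emph{data-processing / monotonicity} step $I(W_{[k:K-1]};A_{0:N-1}^{[k-1]}\mid W_{[0:k-1]},F)\ge I(W_{[k:K-1]};A_{\mathcal T_m}^{[k-1]}\mid W_{[0:k-1]},F)$, applied for each colluding set $\mathcal T_m$ separately (this is why P3 is quantified over every $\mathcal T_m$). Tightness there means $H(A_{[0:N-1]}^{[0]}\mid A_{\mathcal T_m}^{[0]},W_0,F)=0$, which after stripping the $W_0$-component gives "every residual is determined by $R_{\mathcal T_m}$." Meanwhile \textbf{P4} is the one that comes from the \emph{fractional Shearer} step $\sum_m x_m^*H(A_{\mathcal T_m}^{[k]}\mid W_{[0:k-1]},F)\ge H(A_{0:N-1}^{[k]}\mid W_{[0:k-1]},F)$; the equality case of Proposition~\ref{shearers lemma} gives conditional independence of the $A_n^{[k]}$, which after removing the residual parts becomes independence of the $W_k$-signals. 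You swap these: you apply Shearer to the residuals $R_n$ and try to conclude P3, and you attribute P4 to a vague "interference-alignment" argument plus P1.

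The swap is not repairable as stated. Shearer's equality condition on $(R_n)_n$ with fractional cover $\mathbf x^*$ yields \emph{mutual independence} of the $R_n$ whose cover-sum is exactly $1$ (and constancy of those with cover-sum $>1$). But P3 asserts the opposite: that every $R_n$ is a \emph{deterministic function} of $R_{\mathcal T_m}$ for every $\mathcal T_m$. For $n\notin\mathcal T_m$, mutual independence plus deterministic dependence forces $R_n$ to be constant — not what P3 says and not what capacity-achieving schemes satisfy. So the route you propose for P3 collapses. Relatedly, the "main obstacle" you highlight — getting P3 from Shearer "for every $m$ simultaneously" rather than one optimal cover — is an artifact of this misattribution: in the correct proof, P3 is obtained once per $\mathcal T_m$ from a per-set data-processing inequality, so there is no quantifier issue there, and the Shearer step is invoked exactly once (for P4) and gives a statement that does not mention $m$ at all. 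Your account of the "only if" direction is therefore missing the actual derivation of both P3 and P4, and your "if" direction would inherit the same confusion.
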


\begin{remark}
    The properties \textbf{P1}, \textbf{P3}, and \textbf{P4} in Theorem~\ref{thm:characterizing-properties} recover the necessary conditions for capacity-achieving decomposable PIR schemes in the no-collusion setting, i.e., $\mc{P}=\{\{0\}, \{1\}, \dots, \{N-1\}\}$~\cite[Lemmas~1 and 2]{TianSC19}.
\end{remark}

\begin{remark}\label{remark2}
In \cite[Theorem~5]{miki2025necessary} and \cite[Theorem~2]{MikiMM24}, necessary and sufficient conditions for capacity-achieving linear PIR schemes under $T$-collusion (including the no-collusion case $T=1$) are obtained as follows: 
\begin{align}
                            \text{rank} \ Q_{[0:N-1]}^{(k)}[\mc{I}] &= \sum_{n=0}^{N-1} \text{rank} \ Q_n^{(k)}[\mc{I}],\label{eqn:tcolrec1}\\
                            \text{rank} \ Q_{[0:N-1]}^{(k)}[\bar{k}] &=\max_{\substack{\tau\subseteq [0:N-1]:\\ |\tau|=T}}\text{rank} \ Q_\tau^{(k)}[\bar{k}] \label{eqn:tcolrec2},           
\end{align}
where $k\in\mc{I}\subseteq [0:K-1]$, $ \bar{k}=[0:K-1]\setminus\{k\}$, $Q_n^{(k)}=(Q_n[0], Q_n[1], \dots, Q_n[K-1])$ is the query to $n^{\text{th}}$-server, and
\begin{align}
Q_n^{(k)}[C]&=(Q_n^{(k)}[i]: i\in C), C\subseteq [0:K-1],\\
Q_{D}^{(k)}[S]&=(Q_i[S]: i\in D)^T, D\subseteq [0:N-1]. 
\end{align}
When specialized to linear PIR schemes under $T$-collusion, Theorem~\ref{thm:characterizing-properties} recovers \eqref{eqn:tcolrec1} and \eqref{eqn:tcolrec2}, see Appendix~\ref{recovering_rank_equations} for details.
\end{remark}

\begin{proof}[Proof Sketch of Theorem~\ref{thm:characterizing-properties}]
     The properties $\textbf{P1}$ and \textbf{P2} directly follow from \cite[Lemma~1]{TianSC19}, which was for the no-collusion setting, since their proofs do not rely on the privacy constraint and use only the correctness and structural properties of the PIR scheme. The proof for the properties \textbf{P3} and \textbf{P4} rely on the privacy constraint, among other things. In particular, we derive {\bf{P3}} and {\bf{P4}} as the necessary and sufficient conditions for the following inequality to hold for any decomposable PIR scheme under arbitrary collusion pattern $\mc{P}$, for any $k\in[1:K-1]$: 
    \begin{align}
		  & S^*I(W_{[k:K-1]};\allanswers^{[k-1]}|W_{[0:k-1]},F)\nonumber                   \\
		  & ~~\geq  I(W_{[k+1:K-1]};\allanswers^{[k]}|W_{[0:k]},F)+L\log{|\mathcal{X}|}, 
	\end{align}
    where $S^*$ is the optimal value of the linear programming problem in \eqref{eq:LP1}. Although property \textbf{P4} coincides with its counterpart in the no-collusion setting~\cite[Lemma~2]{TianSC19}, it arises here via fundamentally different argument.  Specifically, our proof leverages the necessary and sufficient conditions for equality in the fractional version of Shearer's lemma (see Proposition~\ref{shearers lemma}), established recently in \cite{JakharKCP25}. A detailed proof is given in 
    \if \extended 1%
    Appendix \ref{proof:thm1}.
\fi
\if \extended 0%
    the extended version~\cite[Appendix~A]{DornadulaPKP26}.
\fi
\end{proof}

\section{A Lower Bound on the Message Size and Optimality for Cyclically Contiguous Collusion}
In this section, we derive a general lower bound on the optimal message size for capacity-achieving PIR schemes under arbitrary collusion patterns. We then specialize this bound to several classes of collusion patterns, including $T$-collusion~\cite{SunJ18}, disjoint collections of colluding sets~\cite{JiaSJ17}, cyclically $T$-contiguous collusion~\cite{SunJ18}, and disjoint collections of cyclically contiguous colluding sets. For the last two collusion patterns, we present matching achievable schemes that attain the corresponding bounds, thereby exactly characterizing the optimal message size. We also present several examples to illustrate the general bound. Let $L^*\log{|\mc{X}|}$ denote the optimal message size  of capacity-achieving PIR schemes under collusion pattern $\mc{P}$. 

\subsection{A General Lower Bound on the Optimal Message Size}
Recall from the discussion preceding Theorem~\ref{thm:characterizing-properties} that, without loss of generality, for a given collusion pattern $\mathcal{P} = \{\mathcal{T}_1, \mathcal{T}_2, \ldots, \mathcal{T}_M\}$, there exists an optimal solution $\mathbf{x}^*$ for the linear program in \eqref{eqn:LPinx} such that $x^*_m>0$ for all $m\in[1:M]$.
\begin{theorem}\label{thm:minimum-message size}
	The optimal message size of any uniformly decomposable capacity-achieving PIR scheme under collusion pattern $\mathcal{P} = \{\mathcal{T}_1, \mathcal{T}_2, \ldots, \mathcal{T}_M\}$ is lower bounded as
  \begin{align}
      L^*\log{|\mc{X}|}&\geq \alpha\left(\mc{F}(\mc{P})\right)\log{|\mc{Y}|}\label{eqn:messbound}\\
      &\geq \alpha\left(\mc{F}(\mc{P})\right)\label{eqn:messbound1},
  \end{align}
where $\alpha(\mc{F}(\mc{P}))$ denotes the hitting number of the family $\mc{F}(\mc{P})$, i.e., the minimum cardinality of a set that intersects every set in the family
\begin{align}
\mathcal{F}(\mc{P})= \Big\{ S \subseteq [0\!:\!N\!-\!1]& :\;
\exists\, \mc{T}_m \in \mathcal{P},\ \mc{T}_m \subseteq S,\ 
\forall i \in \mc{T}_m,\nonumber\\
&\exists\, \mc{T}_r \in \mathcal{P}\ \text{s.t. } 
\mc{T}_r \subseteq S \setminus \{i\} \Big\} \label{Family_for_hitting}.
\end{align}
\end{theorem}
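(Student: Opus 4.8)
The plan is to use the characterizing properties \textbf{P1}--\textbf{P4} from Theorem~\ref{thm:characterizing-properties}, especially \textbf{P3} and \textbf{P4}, to show that the response-coding matrices $G_n^{(q_n)}$ of a capacity-achieving scheme force each server's response length to behave like a ``fractional cover'' of a certain family, and that consequently the total message size must dominate the hitting number $\alpha(\mc{F}(\mc{P}))$. The inequality \eqref{eqn:messbound1} is immediate from \eqref{eqn:messbound} since $|\mc{Y}|\ge 2$, so all the work is in \eqref{eqn:messbound}. First I would fix a query realization $\allqueriesinstantiation$ with positive probability and a message index $k$, and examine the matrices $G^{(q_n)}_{n|0:k-1,k+1:K-1}$ (the columns carrying $W_k$-information with all other messages projected out). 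By \textbf{P4} these are independent across $n$, and by \textbf{P2} the concatenated response is uniform; using uniform decomposability this forces the $W_k$-part of server $n$'s response to be a ``free'' uniform vector over $\mc{Y}$ of some length $d_n$, with $\sum_n d_n$ matching the entropy budget dictated by capacity. Similarly, \textbf{P3} says the residual signals $R_n=W_{\{0:k-1,k+1:K-1\}}\cdot G^{(q_n)}_{n|k}$ are all recoverable from $R_{\mc{T}_m}$ for every colluding set $\mc{T}_m$; I would read this as saying that the span of residual information of \emph{any} server is already contained in the joint span of every single colluding set, which is the algebraic seed of the family $\mc{F}(\mc{P})$.

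The heart of the argument is to turn these algebraic equalities into a combinatorial covering statement. I expect to show: for a capacity-achieving uniformly decomposable scheme, if a subset $S\subseteq[0:N-1]$ lies in $\mc{F}(\mc{P})$ — i.e. there is a colluding set $\mc{T}_m\subseteq S$ such that for each $i\in\mc{T}_m$ some colluding set $\mc{T}_r\subseteq S\setminus\{i\}$ — then the responses $\{A_n^{(q_n)}:n\in S\}$ must contain at least one symbol of ``genuinely private'' information that cannot be recovered from any proper ``hole'' in $S$; equivalently, the set of servers carrying the private component of $W_k$ must hit $S$. The mechanism: by \textbf{P3}, recoverability from $\mc{T}_r\subseteq S\setminus\{i\}$ means that removing server $i$ does not lose residual information inside $S$; combined with the privacy constraint and the structure forced by capacity-achievability, the only way correctness can still hold is for the private $W_k$-signal to live (at least partly) on $S$ — otherwise the query $\allqueriesinstantiation$ would be indistinguishable from one requesting a different message while failing correctness, or would violate the entropy/capacity bound. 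Making this precise is where I would invoke the equality conditions of Shearer's lemma (Proposition~\ref{shearers lemma}) again, applied to the family of colluding sets restricted to $S$, so that the fractional cover value and the hole structure defining $\mc{F}(\mc{P})$ appear directly.

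Once I have ``the support of the private $W_k$-signal is a hitting set of $\mc{F}(\mc{P})$,'' the message-size bound follows by counting: each server in that support contributes at least one uniform $\mc{Y}$-symbol to the $W_k$-component (by \textbf{P2} and \textbf{P4}, as above), and correctness of $W_k$ means $H(W_k)=L\log|\mc{X}|$ is at most the total entropy of the $W_k$-carrying symbols, which is at least $|\text{support}|\cdot\log|\mc{Y}|\ge\alpha(\mc{F}(\mc{P}))\log|\mc{Y}|$. The main obstacle, I expect, is the middle step: precisely characterizing why $S\in\mc{F}(\mc{P})$ forces the private signal to hit $S$. The definition of $\mc{F}(\mc{P})$ is delicate — the ``$\forall i\in\mc{T}_m,\exists\mc{T}_r\subseteq S\setminus\{i\}$'' clause is exactly the condition under which a naive scheme could ``route around'' any single server inside $\mc{T}_m$, so one must argue that if the private signal avoided $S$ entirely, then the scheme on $S$ would be simultaneously private and correct yet too cheap, contradicting that $S^*$ is the LP optimum; pinning down this contradiction using \textbf{P3}, \textbf{P4}, and the exact form of $C_\mc{P}$ in \eqref{eqn:capacity-arbitrary} is the technical crux. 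I would also need to check that the choice of $k$ and of the query realization can be made uniformly enough that the hitting set is the same across the relevant instances, or else take a union/worst-case over them, which is a routine but necessary bookkeeping step.
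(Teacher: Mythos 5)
Your high-level outline has the right shape --- fix a query realization and index $k$, argue the set $B$ of servers with $H(W_k\cdot G^{(q_n)}_{n|\bar{k}})\neq 0$ is a hitting set for $\mc{F}(\mc{P})$, then sum entropies using uniform decomposability and \textbf{P4} --- but the middle step is where your mechanism diverges from what actually works, and there is a genuinely missing ingredient.

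First, the missing ingredient. Before you can show that $B$ hits every $S\in\mc{F}(\mc{P})$, you need to \emph{choose the query realization carefully}: the paper first establishes that there exists a $\allqueriesinstantiation$ with positive probability such that for \emph{every} colluding set $\mc{T}_m\in\mc{P}$, the residual $W_{\bar{k}}\cdot G^{(q_{\mc{T}_m})}_{\mc{T}_m|k}$ has nonzero entropy. (This is proved by contradiction: if every realization had some colluding set with vanishing residual, then by \textbf{P3} all residuals would vanish for that realization, making every answer a function of $W_k$ alone, which is incompatible with privacy plus correctness.) Without this preselected realization, the subsequent contradiction argument cannot even get started, because the witness colluding set $\mc{T}_m\subseteq S$ might have all-constant residuals. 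Your proposal never identifies this necessary preliminary claim.

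Second, the mechanism. You propose to derive the contradiction ``if the private $W_k$-signal avoided $S$, then the scheme on $S$ would be too cheap, contradicting LP optimality,'' and you say you would re-invoke the equality conditions of Shearer's lemma restricted to $S$. That is not how the argument goes, and I do not see how to make it precise --- the LP controls the global download, not a sub-scheme on $S$, and Shearer's equality conditions were already consumed in establishing \textbf{P3}/\textbf{P4}. The actual contradiction is much more direct and comes from \textbf{P1} (independence of the $N$ answers): if $H(W_k\cdot G^{(q_n)}_{n|\bar{k}})=0$ for all $n\in S$, then $A_n^{(q_n)}=R_n\oplus\tilde\Delta_n$ for $n\in S$. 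By the preliminary claim, the witness $\mc{T}_m\subseteq S$ contains some $i$ with $H(R_i)\neq 0$, hence $H(A_i^{(q_i)})\neq 0$; and the defining ``$\forall i\,\exists\,\mc{T}_r\subseteq S\setminus\{i\}$'' clause of $\mc{F}(\mc{P})$ plus \textbf{P3} makes $A_i^{(q_i)}$ a deterministic function of $A_{\mc{T}_r}^{(q_{\mc{T}_r})}$ with $i\notin\mc{T}_r$ --- a nonconstant random variable cannot be simultaneously a deterministic function of, and independent of, the same tuple, contradicting \textbf{P1}. Your bookkeeping worry about making the hitting set uniform over $k$ and $\allqueriesinstantiation$ is also unnecessary: a single good realization suffices, since the bound is on the scheme's $L$, not a per-realization quantity.
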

\begin{proof}[Proof Sketch]
    Let the desired-message-dependent part of the answer $A_n^{(q_n)}$ be denoted by $M_n=W_k \cdot G^{(q_n)}_{n|0:k-1,\,k+1:K-1}$, $n\in[0:N-1]$. We first show that the family $\mc{F}(\mc{P})$ in \eqref{Family_for_hitting} characterizes the collection of all sets $S\subseteq [0:N-1]$ for which it is not possible that all entropies $H(M_i)$, $i\in S$ are equal to zero. The main novelty in our proof lies in identifying a \emph{common} structure for such sets $S$ that applies to \emph{arbitrary} collusion patterns $\mc{P}$. We achieve this by exploiting, in a non-trivial manner, the algebraic structure induced by the properties \textbf{P1}-\textbf{P4} for capacity-achieving schemes to derive a unified characterization. We then show that the optimal message size is lower bounded by the minimum number of non-zero entropies among $H(M_n)$, $n\in[0:N-1]$, which in turn corresponds to the hitting number of $\mc{F}(\mc{P})$. A detailed proof is given in 
    \if \extended 1%
    Appendix \ref{appendix:Thm2proof}.
\fi
\if \extended 0%
    the extended version~\cite[Appendix~B]{DornadulaPKP26}.
\fi
\end{proof}
We next specialize the above bound in \eqref{eqn:messbound} to two classes of collusion patterns, namely, $T$-collusion and collusion pattern consisting of disjoint collections of colluding sets, which lead to the following corollaries.  A detailed proof of Corollary~\ref{corollory_1} is given in 
    \if \extended 1%
    Appendix \ref{appndx:cor1proof}.
\fi
\if \extended 0%
    the extended version~\cite[Appendix~C]{DornadulaPKP26}.
\fi

					\begin{corollary} \label{corollory_1}
						For any uniformly decomposable capacity-achieving PIR scheme under $T$-collusion, i.e., $\mc{P}=\{S\subseteq [0:N-1]: |S|=T\}$, we have
						\begin{align}\label{eqn:lowbnd-T-PIR}
							L^*\log|\mathcal{X}| \ge (N-T)\log|\mathcal{Y}|. 
						\end{align}  
					\end{corollary}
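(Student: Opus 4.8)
The plan is to specialize Theorem~\ref{thm:minimum-message size} by computing the hitting number $\alpha(\mc{F}(\mc{P}))$ for the $T$-collusion pattern $\mc{P}=\{S\subseteq[0:N-1]:|S|=T\}$. So the whole task reduces to two things: first, identifying exactly which subsets $S\subseteq[0:N-1]$ belong to the family $\mc{F}(\mc{P})$ in \eqref{Family_for_hitting}, and second, finding the minimum cardinality of a set hitting all of them.

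For the first step, I would unwind the definition of $\mc{F}(\mc{P})$ in this setting. A set $S$ lies in $\mc{F}(\mc{P})$ iff there exists $\mc{T}_m\in\mc{P}$ with $\mc{T}_m\subseteq S$ and, for every $i\in\mc{T}_m$, there is some $\mc{T}_r\in\mc{P}$ with $\mc{T}_r\subseteq S\setminus\{i\}$. Under $T$-collusion every $T$-subset is a colluding set, so "$\exists\,\mc{T}_m\in\mc{P}$ with $\mc{T}_m\subseteq S$" is equivalent to $|S|\ge T$, and for each $i\in\mc{T}_m$ the condition "$\exists\,\mc{T}_r\subseteq S\setminus\{i\}$" just says $|S\setminus\{i\}|\ge T$, i.e. $|S|\ge T+1$. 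Putting these together, $\mc{F}(\mc{P})=\{S\subseteq[0:N-1]:|S|\ge T+1\}$. (One should double-check the edge case $T=N$, where no set has size $\ge T+1$, so $\mc{F}(\mc{P})=\emptyset$ and the bound reads $L^*\log|\mc{X}|\ge 0$, which is vacuously true and consistent with \eqref{eqn:lowbnd-T-PIR}.)

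For the second step, I need the minimum size of a set $H\subseteq[0:N-1]$ that intersects every subset of size $T+1$. Equivalently, the complement $[0:N-1]\setminus H$ must contain no subset of size $T+1$, i.e. $|[0:N-1]\setminus H|\le T$, so $|H|\ge N-T$. Conversely any $H$ of size $N-T$ has a complement of size $T$, which cannot contain a $(T+1)$-subset, so such an $H$ hits every member of $\mc{F}(\mc{P})$. Hence $\alpha(\mc{F}(\mc{P}))=N-T$, and plugging into \eqref{eqn:messbound} gives $L^*\log|\mc{X}|\ge (N-T)\log|\mc{Y}|$, which is exactly \eqref{eqn:lowbnd-T-PIR}.

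I do not expect any real obstacle here — the corollary is a clean combinatorial specialization. The only points requiring a little care are: (a) verifying that, with the standing assumption that $\mc{P}$ already has the downward-closure and "every server appears" properties, the $T$-collusion pattern indeed satisfies the normalization assumed before Theorem~\ref{thm:minimum-message size} (an optimal solution to \eqref{eqn:LPinx} with all coordinates positive — here by symmetry the uniform solution works, since $\mc{P}$ is vertex-transitive); and (b) handling the degenerate regime $T\ge N$ (or $T=N$) separately as noted above. Everything else is the two short set-counting arguments for $\mc{F}(\mc{P})$ and for its hitting number.
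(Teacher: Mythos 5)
Your proposal is correct and takes essentially the same approach as the paper's own proof in Appendix~C: identify $\mathcal{F}(\mathcal{P})$ as the family of all subsets of size at least $T+1$, then observe that a set hits every such subset if and only if its complement has size at most $T$, yielding $\alpha(\mathcal{F}(\mathcal{P})) = N-T$. The paper phrases the lower bound on the hitting number via an explicit contradiction argument and you phrase it via the complement, but these are the same reasoning; your extra remarks on the $T=N$ edge case and the LP normalization are sound but not required.
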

\begin{remark}
    For $T=1$, the bound in \eqref{eqn:lowbnd-T-PIR} reduces to $(N-1)\log{|\mc{Y}|}$, recovering the optimal message size for the no-collusion setting~\cite[Theorem~3]{TianSC19}.
\end{remark}	
\begin{remark}
    We compare the lower bound on optimal message-size in $T$-PIR in \eqref{eqn:lowbnd-T-PIR} with the minimum message-size in PIR with MDS-coded servers~\cite{Zhouetal_2019_TIT_MDS_optimalmsgsize_lowerupcost,Zhuetal_2020_MDS_optimalmessagesize_higher_upcost}. Although the capacity of PIR under $T$-collusion with $N$ replicated servers coincides with that of PIR with $(N,T)$-MDS-coded servers~\cite{BanawanU18}, this does not necessarily imply that the corresponding message size requirements must also coincide. In particular, the optimal message size in capacity-achieving PIR schemes for $(N,T)$-coded servers is $\textbf{lcm}(N-T,T)$~\cite{Zhouetal_2019_TIT_MDS_optimalmsgsize_lowerupcost,Zhuetal_2020_MDS_optimalmessagesize_higher_upcost}, whereas Corollary~\ref{corollory_1} establishes a lower bound of $N-T$ for PIR under $T$-collusion. These two quantities coincide if and only if $T$ divides $N$ and $N\geq 2T$. It remains an open question whether the lower bound $N-T$ is actually achievable for PIR under $T$-collusion.  
\end{remark}

Consider $N$ servers partitioned into $M$ disjoint sets. For each $m\in[1:M]$, the $m^{\text{th}}$-set consists of $N_m$ servers, among which any $T_m\leq N_m$ servers may collude. This pattern is referred to as $(N_1,\dots,N_M ; T_1,\dots,T_M)$-collusion~\cite{JiaSJ17}.
				
					\begin{corollary}\label{corollary2}						    
						For any uniformly decomposable capacity-achieving PIR scheme under $(N_1,\dots,N_M ; T_1,\dots,T_M)$-collusion, we have
						\begin{align}\label{eqn:lowbnd-disjoint-T-PIR}
							L^*\log|\mathcal{X}| \ge \left(\sum_{m=1}^M(N_m-T_m)+(M-1)\right)\log|\mathcal{Y}|. 
						\end{align}
					\end{corollary}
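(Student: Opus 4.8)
\textbf{Proof plan for Corollary~\ref{corollary2}.}

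The plan is to apply Theorem~\ref{thm:minimum-message size} directly by computing the hitting number $\alpha(\mc{F}(\mc{P}))$ for the $(N_1,\dots,N_M;T_1,\dots,T_M)$-collusion pattern, so the entire task reduces to a combinatorial calculation on the family $\mc{F}(\mc{P})$ defined in \eqref{Family_for_hitting}. First I would fix the optimal solution of the linear program \eqref{eqn:LPinx} for this pattern: since the server sets $P_1,\dots,P_M$ with $|P_m|=N_m$ are disjoint, the covering constraints decouple block-by-block, and within block $m$ one needs a fractional cover of $N_m$ points by the $T_m$-subsets; the optimum is $x^*$ supported uniformly on all $\binom{N_m}{T_m}$ subsets with $S^* = \sum_{m=1}^M N_m/T_m$, and crucially every colluding set in $\mc{P}$ receives positive weight, so the standing assumption $x_m^* > 0$ holds and all of $\mc{P}$ is "active."

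Next I would describe $\mc{F}(\mc{P})$ explicitly. A set $S \subseteq [0:N-1]$ lies in $\mc{F}(\mc{P})$ iff there is a colluding set $\mc{T} \subseteq S$ such that removing any single element $i \in \mc{T}$ still leaves room for some colluding set inside $S \setminus\{i\}$. For the disjoint-block pattern, the colluding sets are exactly the $T_m$-subsets of the blocks $P_m$. So $\mc{T} \subseteq S$ means $|S \cap P_m| \ge T_m$ for the block $m$ containing $\mc{T}$; and for the condition to survive deleting any $i \in \mc{T} \subseteq P_m$, we need either $|S \cap P_m| \ge T_m + 1$ (so a $T_m$-subset still fits in $P_m$ after deleting $i$) or some other block $P_r$ with $|S \cap P_r| \ge T_r$. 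Working this out, the minimal sets in $\mc{F}(\mc{P})$ turn out to be of two shapes: (i) a block saturated with one extra element, i.e., $|S \cap P_m| = T_m + 1$ for a single $m$ and $S$ contained in $P_m$; and (ii) two blocks each exactly saturated, i.e., $|S \cap P_m| = T_m$ and $|S \cap P_r| = T_r$ for two distinct indices $m \neq r$, with $S \subseteq P_m \cup P_r$. A hitting set $H$ must intersect every set of both shapes.

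Then I would compute $\alpha(\mc{F}(\mc{P}))$ by a matching upper/lower bound argument. For the upper bound I would exhibit a hitting set: in block $m$ pick $N_m - T_m$ "free" servers from $P_m$ to be left outside $H$, equivalently put $T_m$ servers of $P_m$ into $H$; this already hits every shape-(i) set within block $m$ only if we are careful — actually to hit a shape-(i) set living in $P_m$ we must leave at most $T_m$ servers of $P_m$ outside $H$, i.e., $|H \cap P_m| \ge N_m - T_m$; simultaneously, to hit every shape-(ii) pair $(m,r)$ we cannot have both $|H \cap P_m| = 0$ on the chosen $T_m$-subset and likewise for $r$, which forces $H$ to fully cover all but $T_m - 1$... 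I would reconcile these to get $|H \cap P_m| \ge N_m - T_m$ for all $m$ and, additionally, at most one block may have $|H \cap P_m|$ exactly equal to $N_m - T_m$ (the others need one more element to kill the cross-block shape-(ii) sets), giving $\alpha(\mc{F}(\mc{P})) = \sum_{m=1}^M (N_m - T_m) + (M-1)$. The lower bound follows by the dual: one displays $M-1$ pairwise "independent" shape-(ii) sets plus the block constraints forcing $\ge N_m - T_m$ in each block, and checks no element can serve double duty across the relevant obstructions. Combining with \eqref{eqn:messbound}--\eqref{eqn:messbound1} yields \eqref{eqn:lowbnd-disjoint-T-PIR}.

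The main obstacle I anticipate is pinning down the exact interaction between the shape-(i) (within-block) and shape-(ii) (cross-block) constraints when computing the hitting number: one must argue carefully that the $+(M-1)$ term is both necessary (no single hitting set can saturate all blocks at the bare minimum $N_m - T_m$ while still meeting every cross-block pair) and sufficient (adding exactly one extra element to each of $M-1$ blocks suffices), and that these extra elements can be chosen consistently without over-counting. A secondary check is the degenerate cases $T_m = N_m$ (a block with no free server), where shape-(i) sets in that block disappear and one must verify the formula still gives the right contribution $0 + $ its share of the $(M-1)$; I would handle this as a boundary case at the end of the argument.
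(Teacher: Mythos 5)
Your proposal matches the paper's proof in all essentials: you characterize $\mc{F}(\mc{P})$ for the disjoint-block pattern as the union of within-block sets (your shape (i), the paper's Type~1) and cross-block sets (your shape (ii), the paper's Type~2), then compute $\alpha(\mc{F}(\mc{P}))$ by showing any hitting set must take at least $N_m - T_m$ elements from each block with at most one block at that bare minimum, and matching this with the explicit construction selecting $N_1 - T_1$ from $P_1$ and $N_j - T_j + 1$ from each other block. Aside from a momentary slip in the upper-bound paragraph (initially putting $T_m$ rather than $N_m - T_m$ servers of $P_m$ into $H$, which you correct in the same sentence) and a somewhat compressed sketch of the lower bound, the argument is the paper's.
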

					
                    A detailed proof of Corollary~\ref{corollary2} is given in 
                    \if \extended 1%
    Appendix \ref{appdnx:cor2proof}.
\fi
\if \extended 0%
    the extended version~\cite[Appendix~D]{DornadulaPKP26}.
\fi
                    
                    \subsection{The Optimal Message Size Under Cyclically Contiguous Collusion}
                For collusion patterns involving cyclically contiguous servers, we completely characterize the optimal message size.
					\begin{theorem}\label{thm:T-contiguous}
						For any uniformly decomposable capacity-achieving PIR scheme under cyclically $T$-contiguous collusion, i.e., $\mc{P}=\{\{0,1,\dots,T-1\}, \{1,\dots,T\}, \dots, \{N-1,0,\dots,T-2\}\}$, we have
						\begin{align}
							L^*\log{|\mc{X}|} \geq \left(\left\lceil{\frac{N}{T}}\right\rceil-1\right)\log{|\mc{Y}|}, 
						\end{align}
                        where $\lceil x\rceil$ is the smallest integer larger than $x$. 

                        Moreover, when $T$ divides N, we have
                        \begin{align}
                            L^*\log{|\mc{X}|} = \left(\left\lceil{\frac{N}{T}}\right\rceil-1\right)\log{|\mc{Y}|}.
                        \end{align}
					\end{theorem}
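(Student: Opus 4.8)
\emph{Converse.} By Theorem~\ref{thm:minimum-message size}, it suffices to show $\alpha(\mc{F}(\mc{P}))\geq\lceil N/T\rceil-1$, where $\mc{P}=\{I_0,\dots,I_{N-1}\}$ with $I_j=\{j,j+1,\dots,j+T-1\}\pmod N$. First I would observe that $\mc{F}(\mc{P})$ contains every $S\subseteq[0:N-1]$ that contains two disjoint windows $I_a,I_b$: in \eqref{Family_for_hitting}, take $\mc{T}_m=I_a$, and for each $i\in I_a$ take $\mc{T}_r=I_b$ (legitimate since $i\notin I_b$ forces $I_b\subseteq S\setminus\{i\}$). Consequently, any $H$ hitting $\mc{F}(\mc{P})$ must meet $I_a\cup I_b$ for every disjoint pair; equivalently, $[0:N-1]\setminus H$ contains no two disjoint $T$-windows. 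Decomposing $[0:N-1]\setminus H$ into its maximal cyclic runs, this forces at most one run of length $\geq T$ (and then of length $\leq 2T-1$), with every other run of length $\leq T-1$. Since $|H|=h$ produces at most $h$ runs of total length $N-h$, one obtains $N-h\leq(2T-1)+(h-1)(T-1)$, which simplifies to $h\geq N/T-1$, hence $h\geq\lceil N/T\rceil-1$. (In the remaining range $T<N<2T$ there are no two disjoint windows, but then $\lceil N/T\rceil-1=1$ while $[0:N-1]\in\mc{F}(\mc{P})$ --- take $\mc{T}_m=I_0$ and, for each $i\in I_0$, the window $\{i+1,\dots,i+T\}\subseteq[0:N-1]\setminus\{i\}$ --- so $\alpha(\mc{F}(\mc{P}))\geq 1$ still.) Plugging into \eqref{eqn:messbound} gives the stated lower bound for all $N,T$.

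\emph{Achievability for $T\mid N$.} Put $c=N/T$ (so $c\geq 2$ in the nondegenerate case $N>T$) and $\mc{A}=\{0,T,2T,\dots,(c-1)T\}$. The key geometric fact: the multiples of $T$ partition $[0:N-1]$ into $c$ consecutive length-$T$ arcs, each meeting any length-$T$ window in exactly one point, so every colluding set $I_j$ contains exactly one server of $\mc{A}$. The scheme runs the uniformly decomposable capacity-achieving no-collusion scheme of~\cite{TianSC19} on the $c$ servers of $\mc{A}$ --- message size $(c-1)\log|\mc{Y}|$, with download governed by the $c$-server no-collusion capacity --- while every server outside $\mc{A}$ always returns an empty, length-$0$ response. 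Correctness is inherited from~\cite{TianSC19}. For $\mc{P}$-privacy, a colluding set $I_j$ observes only deterministic null queries to the $T-1$ servers of $I_j\setminus\mc{A}$ together with the single query to the unique server of $I_j\cap\mc{A}$, whose distribution is independent of $k$ by the (single-server) privacy of the TSC scheme; hence so is the joint view. The scheme is uniformly decomposable (the TSC part by~\cite{TianSC19}, the null servers trivially), and its rate equals the $c$-server no-collusion capacity $\bigl(1+c^{-1}+\cdots+c^{-(K-1)}\bigr)^{-1}$, which coincides with $C_{\mc{P}}$ since, for cyclically $T$-contiguous collusion with $T\mid N$, the linear program~\eqref{eq:LP1} has optimum $S^{*}=N/T=c$ (attained at $\mathbf{y}=\tfrac{1}{T}\mathbf{1}_N$, with dual optimum $\mathbf{x}=\tfrac{1}{T}\mathbf{1}_M$). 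Hence $L^{*}\log|\mc{X}|\leq(c-1)\log|\mc{Y}|$, which with the converse yields the equality (and, incidentally, $\alpha(\mc{F}(\mc{P}))=c-1$).

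The step I expect to be the main obstacle is conceptual rather than technical: realizing that cyclic contiguity allows a capacity-achieving scheme supported on only $N/T$ of the $N$ servers. Once the evenly-spaced active set $\mc{A}$ --- meeting each colluding window in a single server --- is identified, both the construction and, through the two-disjoint-windows subfamily of $\mc{F}(\mc{P})$, the converse become routine; the remaining technical care is the run-counting inequality $N-h\leq(2T-1)+(h-1)(T-1)$ and its boundary cases, which is what fixes the constant at exactly $\lceil N/T\rceil-1$.
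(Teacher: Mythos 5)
Your proof is correct. The achievability argument is essentially identical to the paper's (decimate to the evenly-spaced subset $\mc{A}$ of $N/T$ mutually non-colluding servers, run the TSC scheme there, let the rest be silent), but your converse takes a genuinely different and cleaner route. The paper fixes the partition of $[0:N-1]$ into $\lceil N/T\rceil$ contiguous blocks and argues by extensive case analysis on which blocks a purported small hitting set can fail to meet, treating $T\mid N$ and $T\nmid N$ separately and tracking which positions in each block can be chosen. Your argument instead works directly with the \emph{complement} of the hitting set $H$: hitting all of $\mathcal{F}(\mathcal{P})$ forces $[0:N-1]\setminus H$ to contain no two disjoint $T$-windows, which bounds the cyclic runs of the complement (at most one of length $\le 2T-1$, all others $\le T-1$, and at most $h$ runs in total). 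The arithmetic $N-h\le (2T-1)+(h-1)(T-1)$ then gives $h\ge N/T-1$ in one line, uniformly in whether $T$ divides $N$, which is tidier than the paper's casework. Two small points worth flagging if you write this up: (i) the run-count bound requires $h\ge 1$, which you should dispatch first (if $h=0$ and $N\ge 2T$ the complement is the whole cycle and contains two disjoint windows, so $\emptyset$ is not a hitting set); (ii) you only prove the \emph{lower} bound $\alpha(\mc{F}(\mc{P}))\ge\lceil N/T\rceil-1$, which indeed suffices for this theorem, but the paper also establishes the matching upper bound on the hitting number for general $N,T$, and that upper-bound construction is reused in the proof of Theorem~\ref{them:disjointcontiguous}; your incidental observation $\alpha(\mc{F}(\mc{P}))=c-1$ only covers $T\mid N$.
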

A detailed proof of Theorem~\ref{thm:T-contiguous} is given in 
\if \extended 1%
    Appendix~\ref{appndx:proofofthm3}.
\fi
\if \extended 0%
    the extended version~\cite[Appendix~E]{DornadulaPKP26}.
\fi
\begin{remark}\label{remark5}
    In \cite[Appendix~D]{SunJ18_MDS} and \cite{YaoLK21}, the proposed capacity-achieving PIR schemes under cyclically $T$-contiguous collusion requires a message size of $N^K$, see Appendix~\ref{apndx:messagesize-contiguous} for details. In contrast, Theorem~\ref{thm:T-contiguous} implies that our achievable scheme uses the optimal message size of $\lceil \frac{N}{T} \rceil-1$, which is exponentially smaller than $N^K$.
\end{remark}

\begin{remark}
    The capacity of PIR under $T$-collusion is equal to that under cyclically $T$-contiguous collusion~\cite{YaoLK21}. However,  while Theorem~\ref{thm:T-contiguous} shows that the optimal message size for capacity-achieving PIR schemes under cyclically $T$-contiguous collusion is $\left\lceil{\frac{N}{T}}\right\rceil-1$, Corollary~\ref{corollory_1} implies that the optimal message size for PIR under $T$-collusion is $L^*\log|\mathcal{X}|\geq (N-T)$, which is strictly larger than $\left\lceil{\frac{N}{T}}\right\rceil-1$.
\end{remark}
                    
				Motivated by $(N_1,\dots,N_M ; T_1,\dots,T_M)$-collusion~\cite{JiaSJ17}, we define an analogous collusion pattern for cyclically contiguous collusion over disjoint sets of servers. Consider $N$ servers partitioned into $M$ disjoint sets. For each $m\in[1:M]$, the $m^{\text{th}}$-set consists of $N_m$ servers, among which any cyclically $T_m$-contiguous servers may collude. We refer to this collusion pattern as $(N_1,\dots,N_M ; T_1,\dots,T_M)$-cyclically contiguous collusion.
					\begin{theorem}\label{them:disjointcontiguous}
						For any uniformly decomposable capacity-achieving PIR scheme under $(N_1,\dots,N_M ; T_1,\dots,T_M)$-cyclically contiguous collusion, we have 
						\begin{align}
							L^*\log{|\mc{X}|} \geq \left(\sum_{i=1}^{M}\left\lceil{\frac{N_i}{T_i}}\right\rceil-1\right)\log{|\mc{Y}|}. \label{thm_disjoint_t_conti} 
						\end{align} 
                        Moreover, when $T_m$ divides $N_m$, for $m\in[1:M]$, we have
                        \begin{align}
							L^*\log{|\mc{X}|} = \left(\sum_{i=1}^{M}\left\lceil{\frac{N_i}{T_i}}\right\rceil-1\right)\log{|\mc{Y}|}.
						\end{align} 
					\end{theorem}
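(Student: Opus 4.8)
The plan is to obtain the lower bound from Theorem~\ref{thm:minimum-message size} by computing the hitting number $\alpha(\mc{F}(\mc{P}))$ exactly for the disjoint cyclically contiguous pattern, and, when every $T_m$ divides $N_m$, to match it by exhibiting a capacity-achieving uniformly decomposable scheme of the stated message size, built on the construction behind Theorem~\ref{thm:T-contiguous}. Throughout, for a set $S\subseteq[0:N-1]$ let $S_m$ be its restriction to the $m$-th group and let $\mc{A}_m(S)$ denote the collection of cyclically $T_m$-contiguous windows of the $m$-th group that are contained in $S_m$; these are precisely the colluding sets of $\mc{P}$ lying inside $S$.

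For the lower bound, the first step is to describe $\mc{F}(\mc{P})$ combinatorially. Since colluding sets from different groups are automatically disjoint, if at least two of the families $\mc{A}_m(S)$ are non-empty then $S\in\mc{F}(\mc{P})$: take a window $I\in\mc{A}_j(S)$ as the outer set $\mc{T}_m$ and, for each $i\in I$, use any window from $\mc{A}_r(S)$ with $r\ne j$ as the inner set avoiding $i$. If exactly one family, say $\mc{A}_j(S)$, is non-empty, then both the outer window and every avoiding window must lie in the $j$-th group, so we are reduced to a single cycle; here the elementary inclusion $\bigcap_{J\in\mc{A}_j(S)}J\subseteq I$ for any $I\in\mc{A}_j(S)$ collapses the condition ``$\exists\,I\in\mc{A}_j(S)$ such that $\forall i\in I$ there is $J\in\mc{A}_j(S)$ with $i\notin J$'' to the single requirement $\bigcap_{J\in\mc{A}_j(S)}J=\emptyset$. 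If no family is non-empty, $S$ contains no colluding set and $S\notin\mc{F}(\mc{P})$. Hence $S\notin\mc{F}(\mc{P})$ exactly when at most one $\mc{A}_m(S)$ is non-empty and, in the exactly-one case, all windows in $\mc{A}_j(S)$ share a common server.

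The second step computes $\alpha(\mc{F}(\mc{P}))$. As $\mc{F}(\mc{P})$ is upward closed, $\alpha(\mc{F}(\mc{P}))=N-\max\{|U|:U\notin\mc{F}(\mc{P})\}$, and I would maximize $|U|$ over the two regimes using two elementary deletion facts: a cyclic ordering of $a$ servers needs at least $\lceil a/b\rceil$ deletions to destroy every window of $b$ consecutive servers, and, after fixing one server $p$, it needs at least $\lceil a/b\rceil-1$ deletions among the remaining servers so that every surviving $b$-window contains $p$ (equivalently, one must hit all length-$b$ windows of the path obtained by removing $p$). In the regime where $U$ meets no colluding set, $|U|\le\sum_m(N_m-\lceil N_m/T_m\rceil)$; in the regime where $U$ meets colluding sets of exactly one group $j$ and all of them pass through a common server, $|U|\le\sum_{m\ne j}(N_m-\lceil N_m/T_m\rceil)+(N_j-\lceil N_j/T_j\rceil+1)$, which is larger by one and is attained by deleting a minimum hitting set of windows in each group $m\ne j$ and, in group $j$, a minimum hitting set of the windows avoiding a chosen server $p$ (one checks that at least one window through $p$ survives, so $\mc{A}_j(U)\ne\emptyset$). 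Therefore $\max\{|U|:U\notin\mc{F}(\mc{P})\}=N-\sum_m\lceil N_m/T_m\rceil+1$, so $\alpha(\mc{F}(\mc{P}))=\sum_m\lceil N_m/T_m\rceil-1$, and Theorem~\ref{thm:minimum-message size} gives \eqref{thm_disjoint_t_conti}.

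For the matching upper bound when $T_m\mid N_m$ for all $m$, note first that the linear program \eqref{eq:LP1} decouples over the disjoint groups and yields $S^{*}=\sum_m N_m/T_m=:\nu$, so $C_\mc{P}=(1+\nu^{-1}+\cdots+\nu^{-(K-1)})^{-1}$ coincides with the no-collusion capacity on $\nu$ servers. The plan is to adapt the capacity-achieving uniformly decomposable construction underlying Theorem~\ref{thm:T-contiguous} to the disjoint setting: within group $m$, partition the $N_m$ servers into $n_m=N_m/T_m$ consecutive blocks of length $T_m$, treat each block as one virtual no-collusion server, and run a single TSC scheme over the resulting $\nu=\sum_m n_m$ virtual servers, distributing each virtual server's query and response over the real servers of its block. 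One then verifies correctness, $\mc{P}$-privacy, rate $C_\mc{P}$, and total message size $(\nu-1)\log|\mc{Y}|=(\sum_m\lceil N_m/T_m\rceil-1)\log|\mc{Y}|$; combined with the lower bound this forces equality.

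The main obstacle is two-fold. On the converse side it is making the characterization of $\mc{F}(\mc{P})$ fully rigorous once several groups interact --- in particular justifying the reduction to the single-cycle ``common intersection'' criterion, and confirming that the extremal set $U$ really attains the claimed size while staying outside $\mc{F}(\mc{P})$ (including the degenerate subcases $N_m=T_m$). On the achievability side the delicate point is privacy: a cyclic $T_m$-window straddling the boundary between two consecutive blocks of a group sees part of the query of one virtual server together with part of the query of the next, so the block-to-server assignment must be arranged so that this joint view stays independent of the desired index $k$ without enlarging the message size --- a naive assignment in which every server of a block learns that block's entire virtual query would leak the joint query of two adjacent virtual servers, which the TSC scheme does not protect.
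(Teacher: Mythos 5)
Your converse argument is sound, and in fact a little more careful than the paper's: you characterize $\mathcal{F}(\mc{P})$ by reducing the ``$\exists\,I$ such that every $i\in I$ is avoided'' condition to ``the windows in $\mc{A}_j(S)$ have empty common intersection,'' and then compute $\alpha(\mathcal{F}(\mc{P}))$ by the complementation identity $\alpha(\mathcal{F}(\mc{P}))=N-\max\{|U|:U\notin\mathcal{F}(\mc{P})\}$ together with the two ``deletion'' bounds for cyclic windows (namely, $\lceil a/b\rceil$ deletions to destroy all cyclic $b$-windows, and $\lceil a/b\rceil-1$ extra deletions to force all surviving windows through a fixed point). This is a clean, essentially dual presentation of the paper's direct hitting-set construction (first server of each full block in every group, plus one extra element per group beyond the first), and both give $\alpha(\mathcal{F}(\mc{P}))=\sum_{m}\lceil N_m/T_m\rceil-1$; the complementation route has the advantage of handling the ``at most one group active, all its windows concurrent at $p$'' regime more transparently.

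The gap is in the achievability half, and you flagged it yourself. You propose to treat each length-$T_m$ block as a virtual TSC server and ``distribute each virtual server's query and response over the real servers of its block,'' and then you correctly observe that a cyclic $T_m$-window straddling two adjacent blocks would then see the joint query of two consecutive virtual TSC servers, which the TSC scheme does not protect. This is a real problem, and you leave it unresolved. The construction the paper actually uses (and which already underlies Theorem~\ref{thm:T-contiguous}, the result you cite as your model) is much simpler and avoids it entirely: from each block select only its first server, so that the active servers in $\mc{N}=\{0,T_1,\dots,(N_1-T_1), N_1, N_1+T_2,\dots\}$ are spaced $T_m$ apart within each group, send the $m$-th TSC query only to the chosen server of block $m$, and send nothing (a trivial query) to the remaining servers. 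Any cyclically $T_m$-contiguous window then contains at most one active server, so a colluding set observes at most one TSC query, and single-server privacy of the TSC scheme suffices; there is no boundary-straddling issue because each block is represented by a single real server rather than spread across $T_m$ of them. With this, correctness, rate $C_\mc{P}$ (via $S^*=\sum_m N_m/T_m$ from the block-diagonal decoupling of the LP), and message size $(\sum_m N_m/T_m - 1)\log|\mc{Y}|$ follow directly, closing the equality claim. So: your lower bound stands as a valid alternative route, but your achievability sketch as written would fail and needs to be replaced by the one-active-server-per-block construction.
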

                    A detailed proof of Theorem~\ref{them:disjointcontiguous} is given in 
                \if \extended 1%
    Appendix~\ref{appndx:thm4proof}.
\fi
\if \extended 0%
    the extended version~\cite[Appendix~F]{DornadulaPKP26}.
\fi
                    \begin{remark}\label{remark7}
                    Since \cite{YaoLK21} provides an achievable scheme under arbitrary collusion pattern, one can infer an achievable scheme for $(N_1,\dots,N_M ; T_1,\dots,T_M)$-cyclically contiguous collusion with message size $L
=
\left(
\sum_{i=1}^{M}
\frac{ \left( \prod_{j=1}^{M} T_j \right) N_i }{ T_i }
\right)^{K}$. In contrast, Theorem~\ref{them:disjointcontiguous} implies that our achievable scheme uses the optimal message size of $\left(\sum_{i=1}^{M}\left\lceil{\frac{N_i}{T_i}}\right\rceil-1\right)$, which is exponentially smaller than that. See Appendix~\ref{apndx:messagesize-contiguous} for details.
                    \end{remark}

\subsection{An Illustrative Example}

$N=5$ and $\mc{P} = \{\{0,1,2\},\{0,3\},\{1,3\},\{2,3\},\{4\}\} $.

 The corresponding incidence matrix is
 \[
\mathbf{B}_{\mathcal{P}} =
\begin{bmatrix}
1 & 1 & 0 & 0 & 0 \\
1 & 0 & 1 & 0 & 0 \\
1 & 0 & 0 & 1 & 0 \\
0 & 1 & 1 & 1 & 0 \\
0 & 0 & 0 & 0 & 1
\end{bmatrix}.
\]

 The optimal solution to the linear program in \eqref{eqn:LPinx} is $x^{*}=[ \frac{2}{3},\frac{1}{3},\frac{1}{3},\frac{1}{3},1]^{T}$.
 So, $x^*_m>0$ for all $m\in[1:5]$.
 Let 
 \begin{align}
 \mc{F}(\mc{P})=\mc{F}'(\mc{P})\cup \{A:\exists F\in\mc{F}'(\mc{P}) \ s.t. \ F\subseteq A\},
 \end{align}
 where $\mc{F}'(\mc{P})$ is the collection of all minimal sets of $\mc{F}(\mc{P})$. A set $F\in\mc{F}(\mc{P})$ is said to be minimal if no proper subset of $\mc{F}$ belongs to $\mc{F}(\mc{P})$. By definition, $\mc{F}(\mc{P})$ is upward closed, and hence $\alpha(\mc{F}(\mc{P}))=\alpha(\mc{F}'(\mc{P}))$. Scanning over all the subsets of $[0:N-1]$, it is easy to verify that
\begin{align}
\mc{F}'(\mc{P})\!=\! \{\{0,3,4\}, \{1,3,4\}, \{2,3,4\}, \{0,1,2,3\}. \{0,1,2,4\}\}
\end{align}

The minimum cardinality of a set that intersects every set in $\mc{F}'(\mc{P})$ is $2$, e.g., the set \{1,4\} intersects all $F\in\mc{F}'(\mc{P})$. Therefore, $\mc{F}(\mc{P})=\mc{F}'(\mc{P})=2$.

\if \extended 1%

					\appendices

                    \section{Proof of Theorem~\ref{thm:characterizing-properties}}\label{proof:thm1}

We first present two lemmas and show that the equality conditions in those two lemmas completely characterize the necessary and sufficient conditions for capacity-achieving PIR schemes under an collusion pattern $\mathcal{P}$.
\begin{lemma}\label{lemma:Ctian}
	For any PIR scheme under collusion pattern $\mathcal{P}$ and any $k\in[0:K-1]$, we have
	\begin{align}\label{eqn:tianlemma}
		\hspace{-0.5cm}I(W_{[0:K-1]\setminus\{k\}};\allanswers^{[k]}|W_k,F)\leq \left(\frac{L}{R}-L\right)\log{|\mathcal{X}|}. 
	\end{align} 
	Moreover, the properties {\bf{P1}} and {\bf{P2}} are necessary and sufficient for any decomposable scheme to satisfy \eqref{eqn:tianlemma} with equality for all $k\in[0:K-1]$.
\end{lemma}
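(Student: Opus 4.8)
\textit{Proof plan for Lemma~\ref{lemma:Ctian}.} The plan is to carry the argument of \cite[Lemma~1]{TianSC19} over from the no-collusion setting to an arbitrary collusion pattern $\mc{P}$. What makes this essentially automatic is that the argument never invokes privacy: it uses only correctness and the basic deterministic structure of a PIR scheme (queries are a function of $(k,F)$; the $n$-th answer is a function of its query and the messages and lives in $\mc{Y}^{\ell_n}$). I would split the proof into two stages: (a) prove the inequality \eqref{eqn:tianlemma} by a short conditional-entropy chain, and (b) read off, link by link, when that chain is tight, showing that tightness for every $k$ is exactly the conjunction of \textbf{P1} and \textbf{P2}.

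For stage (a), fix $k$ and write $\bar k=[0:K-1]\setminus\{k\}$. Since $\allqueries^{[k]}$ is a function of $F$ (with $k$ fixed) and $A_n^{[k]}=\varphi_n(Q_n^{[k]},\allfiles)$, we have $H(\allanswers^{[k]}\mid\allfiles,F)=0$, so $I(W_{\bar k};\allanswers^{[k]}\mid W_k,F)=H(\allanswers^{[k]}\mid W_k,F)$. Correctness $H(W_k\mid\allanswers^{[k]},\allqueries^{[k]})=0$ gives $H(W_k\mid\allanswers^{[k]},F)=0$ (as $\allqueries^{[k]}$ is a function of $F$), and expanding $H(\allanswers^{[k]},W_k\mid F)$ in the two orders, together with $W_k\perp F$, yields $H(\allanswers^{[k]}\mid W_k,F)=H(\allanswers^{[k]}\mid F)-H(W_k)=H(\allanswers^{[k]}\mid F)-L\log|\mc{X}|$. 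Finally, subadditivity gives $H(\allanswers^{[k]}\mid F)\le\sum_{n}H(A_n^{[k]}\mid F)$; grouping by the induced query, $H(A_n^{[k]}\mid F)=\sum_{q_n}P(Q_n^{[k]}=q_n)\,H(\varphi_n(q_n,\allfiles))\le\log|\mc{Y}|\,\expec[\ell_n(Q_n^{[k]})]$, because $\varphi_n(q_n,\allfiles)$ takes values in $\mc{Y}^{\ell_n(q_n)}$; and summing over $n$ and using $\log|\mc{Y}|\,\expec(\sum_n\ell_n)=\tfrac{L}{R}\log|\mc{X}|$ (the definition of the rate $R$) gives $H(\allanswers^{[k]}\mid F)\le\tfrac{L}{R}\log|\mc{X}|$. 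Chaining these three statements proves \eqref{eqn:tianlemma}.

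For stage (b), note that the chain has exactly two inequalities: the subadditivity step $H(\allanswers^{[k]}\mid F)\le\sum_n H(A_n^{[k]}\mid F)$, and, for each $n$ and each $q_n$ in the support of $Q_n^{[k]}$, the cardinality bound $H(\varphi_n(q_n,\allfiles))\le\ell_n(q_n)\log|\mc{Y}|$. Equality in \eqref{eqn:tianlemma} for a given $k$ therefore holds iff both are tight. Subadditivity is tight iff $A_0^{[k]},\dots,A_{N-1}^{[k]}$ are conditionally mutually independent given $F$; conditioning further on $\allqueries^{[k]}$ (on which event the $A_n$ are functions of $\allfiles$ alone) shows this is precisely the requirement that for every positive-probability query tuple $\allqueriesinstantiation$ the responses $A_0^{(q_0)},\dots,A_{N-1}^{(q_{N-1})}$ are independent, which is \textbf{P1}. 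The cardinality bound is tight iff $\varphi_n(q_n,\allfiles)$ is uniform over $\mc{Y}^{\ell_n(q_n)}$, equivalently $H(A_n^{(q_n)})=\ell_n(q_n)\log|\mc{Y}|$, for every $n$ and every $q_n$ in the support, which is \textbf{P2}. Requiring equality for all $k$ just means requiring these for every query realization that occurs for some $k$, which is the quantification ``for all positive-probability query sets'' in Theorem~\ref{thm:characterizing-properties}; hence \textbf{P1} and \textbf{P2} are jointly necessary and sufficient.

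I do not expect a real obstacle in this lemma --- the content is that of \cite[Lemma~1]{TianSC19} --- so the only thing needing care is bookkeeping: (i) for fixed $k$, conditioning on $F$ and conditioning on the query tuple $\allqueries^{[k]}$ are interchangeable because the query map is deterministic and the messages are independent of $F$, which justifies passing between $H(\cdot\mid F)$ and the per-realization entropies throughout both stages; and (ii) $\expec(\sum_n\ell_n)$ is the same for every $k$ by single-server privacy (implied by our standing assumptions on $\mc{P}$), so $R$ is well defined and \eqref{eqn:tianlemma} carries the same right-hand side for every $k$. Decomposability is never used in this lemma; it is kept in the statement only for uniformity with properties \textbf{P3}--\textbf{P4} and the rest of the characterization.
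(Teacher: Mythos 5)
Your proposal is correct and takes essentially the same approach as the paper: the paper simply invokes \cite[Lemma~1]{TianSC19}, making exactly your observation that the argument never uses privacy (only message independence, correctness, and determinism of the answers), and separately notes that \textbf{P2} is the equality condition for the cardinality bound $H(A_n^{(q_n)})\le\ell_n(q_n)\log|\mathcal{Y}|$ that TSC19 left implicit. Your write-up just unpacks the delegated steps in detail.
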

We note that the proof of Lemma~\ref{lemma:Ctian} follows exactly as in \cite[Proof of Lemma~1]{TianSC19}. This is because the proof relies solely on the following properties of a PIR scheme: (i) $W_0,W_1,\dots,W_{K-1}$ are mutually independent, (ii) the retrieved message is correct, (iii) the answer functions $\allanswers$ are deterministic functions of $(\allfiles,F)$. In particular, the privacy condition is not required for the proof. We remark that property {\bf{P2}} is not explicitly stated in \cite[Lemma~1]{TianSC19}, as the equality conditions for one of the inequalities appearing in the proof, specifically, $H(A_n^{(q_n)})\leq \log_2{\mathcal{Y}^{\ell_n}}$, $n\in[0:N-1]$ in \cite[Proof of Lemma~1]{TianSC19}, were not analyzed. Nevertheless, it is straightforward to see that {\bf{P2}} is a necessary and sufficient condition for this inequality to hold with equality. 

For a permutation function $\pi:[0:K-1]\rightarrow [0:K-1]$, let $\pi(S)=\{\pi(i):i\in S\}$, for $S\subseteq [0:K-1]$.

\begin{lemma}\label{lemma:iterative-arbitrary-pir}
	Let $\pi:[0:K-1]\rightarrow [0:K-1]$ be a permutation function. For any PIR scheme under collusion pattern $\mathcal{P} = \{\mathcal{T}_1, \mathcal{T}_2, \ldots, \mathcal{T}_M\}$, and any $k\in\{1,2,\dots,K-1\}$, we have
	\begin{align}\label{eqn:mainlemma}
		  & S^{*}I(W_{\pi([k:K-1])};\allanswers^{\pi([k-1])}|W_{\pi([0:k-1])},F)\nonumber                   \\
		  & ~~\geq  I(W_{\pi([k+1:K-1])};\allanswers^{[k]}|W_{\pi([0:k])},F)+L\log{|\mathcal{X}|}, 
	\end{align}
	where $S^*$ is the optimal value of the linear program
		\begin{align}
			\min_{\mathbf{x}} \quad & \mathbf{1}_{M}^{T}\mathbf{x}                              \nonumber \\
			\text{subject to} \quad 
			                        & \mathbf{B}_{\mathcal{P}}\mathbf{x} \ge \mathbf{1}_{N},\nonumber \\
			                        & \mathbf{x} \ge \mathbf{0}_{M} \label{LP2} .                            	\end{align}
	
	Moreover, the properties {\bf{P3}} and {\bf{P4}} are necessary and sufficient for any decomposable PIR scheme to satisfy \eqref{eqn:mainlemma} with equality for all $k\in[0:K-1]$.
\end{lemma}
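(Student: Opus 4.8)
The plan is to prove \eqref{eqn:mainlemma} by a short chain of (in)equalities that uses only the independence/uniformity of the messages, correctness, the deterministic structure of the answers, a conditional form of the fractional Shearer inequality (Proposition~\ref{shearers lemma}) with the covering weights supplied by an optimal solution of \eqref{LP2}, and the $\mc{P}$-privacy constraint; the equality conditions are then obtained by inspecting which steps of that chain can be strict. Since the messages are i.i.d.\ and correctness/privacy hold for every desired index, it suffices to treat $\pi$ equal to the identity and the general case follows by relabelling. First I would rewrite the right-hand side of \eqref{eqn:mainlemma}: independence and uniformity of the messages give $H(W_k\mid W_{[0:k-1]},F)=L\log|\mc{X}|$, and since $\allqueries^{[k]}$ is a function of $F$, correctness gives $H(W_k\mid\allanswers^{[k]},W_{[0:k-1]},F)=0$; hence $I(W_k;\allanswers^{[k]}\mid W_{[0:k-1]},F)=L\log|\mc{X}|$ and, by the chain rule,
\begin{align*}
I(W_{[k+1:K-1]};\allanswers^{[k]}\mid W_{[0:k]},F)+L\log|\mc{X}|=I(W_{[k:K-1]};\allanswers^{[k]}\mid W_{[0:k-1]},F).
\end{align*}
Moreover, every answer is a deterministic function of $(\allfiles,F)$, so $H(A_S^{[j]}\mid\allfiles,F)=0$ for any $S\subseteq[0:N-1]$ and any index $j$, and therefore $H(A_S^{[j]}\mid W_{[0:k-1]},F)=I(W_{[k:K-1]};A_S^{[j]}\mid W_{[0:k-1]},F)$; in particular the right-hand side above equals $H(\allanswers^{[k]}\mid W_{[0:k-1]},F)$.

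Next I would apply the covering step. Fix an optimal solution $\mathbf{x}^{*}$ of \eqref{LP2} with $x_m^{*}>0$ for all $m$; feasibility means $\sum_{m:\,n\in\mc{T}_m}x_m^{*}\ge1$ for each server $n$, so $m\mapsto x_m^{*}$ is a fractional covering of $[0:N-1]$ by the family $\mc{P}$ of total weight $\sum_m x_m^{*}=S^{*}$. The conditional form of Proposition~\ref{shearers lemma}, applied to $(A_0^{[k]},\dots,A_{N-1}^{[k]})$ given $(W_{[0:k-1]},F)$, yields $H(\allanswers^{[k]}\mid W_{[0:k-1]},F)\le\sum_m x_m^{*}\,H(A_{\mc{T}_m}^{[k]}\mid W_{[0:k-1]},F)$. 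For the privacy swap, note that $A_{\mc{T}_m}^{[k]}$ depends on $F$ only through $Q_{\mc{T}_m}^{[k]}$ and that $W_{[k:K-1]}\perp(F,W_{[0:k-1]})$, so $H(A_{\mc{T}_m}^{[k]}\mid W_{[0:k-1]},F)=\sum_{q_{\mc{T}_m}}P_{Q_{\mc{T}_m}^{[k]}}(q_{\mc{T}_m})\,H\!\big((\varphi_n(q_n,\allfiles))_{n\in\mc{T}_m}\mid W_{[0:k-1]}\big)$, in which the conditioning set and the inner entropies are the same whether the desired index is $k$ or $k-1$; the privacy identity $P_{Q_{\mc{T}_m}^{[k]}}=P_{Q_{\mc{T}_m}^{[k-1]}}$ then gives $H(A_{\mc{T}_m}^{[k]}\mid W_{[0:k-1]},F)=H(A_{\mc{T}_m}^{[k-1]}\mid W_{[0:k-1]},F)=I(W_{[k:K-1]};A_{\mc{T}_m}^{[k-1]}\mid W_{[0:k-1]},F)$. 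Since $A_{\mc{T}_m}^{[k-1]}$ is a sub-tuple of $\allanswers^{[k-1]}$ this last quantity is at most $I(W_{[k:K-1]};\allanswers^{[k-1]}\mid W_{[0:k-1]},F)$; summing over $m$, the right-hand side of \eqref{eqn:mainlemma} is at most $S^{*}I(W_{[k:K-1]};\allanswers^{[k-1]}\mid W_{[0:k-1]},F)$, which is its left-hand side.

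For the characterization, observe that this chain contains exactly two inequalities — the Shearer step and the sub-tuple step — while the correctness/chain-rule rewriting and the privacy swap are identities. Hence \eqref{eqn:mainlemma} holds with equality for a given $k\in[1:K-1]$ and (after undoing the relabelling) a given permutation $\pi$ if and only if: (i) conditioned on $(W_{\pi([0:k-1])},F)$, the answers $A_n^{\pi(k)}$ with $\sum_{m:\,n\in\mc{T}_m}x_m^{*}=1$ are mutually independent and those with $\sum_{m:\,n\in\mc{T}_m}x_m^{*}>1$ are a.s.\ constant (the equality case of Proposition~\ref{shearers lemma}), and (ii) $I(W_{\pi([k:K-1])};\allanswers^{\pi(k-1)}\mid A_{\mc{T}_m}^{\pi(k-1)},W_{\pi([0:k-1])},F)=0$ for every $m$. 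I would then translate (i)--(ii) into \textbf{P3} and \textbf{P4} via decomposability: conditioning on a set of messages and on $F$ deletes the corresponding rows of each $G_n^{(q_n)}$ and shifts the answer by a constant known given the conditioning. Taking $k=1$, so that the conditioning set is a single message $\pi(0)$ ranging over all indices, (ii) is exactly the assertion that the residuals $R_n=W_{[0:K-1]\setminus\{\pi(0)\}}\cdot G^{(q_n)}_{n|\pi(0)}$ are deterministic functions of $R_{\mc{T}_m}$, i.e.\ \textbf{P3}; conversely, for general $k,\pi$ one recovers (ii) from \textbf{P3} by the same shift-by-constant observation. Taking $k=K-1$, so that all messages but the desired one $\pi(K-1)$ are conditioned, the independence part of (i) is the mutual independence of the requested-message signals of \textbf{P4}; conversely one recovers (i) for general $k,\pi$ by writing the conditioned answer as the independent sum of its $W_{\pi(k)}$-part, controlled by \textbf{P4}, and its $W_{\pi([k+1:K-1])}$-part, controlled by \textbf{P3}.

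The hard part will be this last reduction, and in particular reconciling \textbf{P3}, \textbf{P4} with the servers whose covering weight $\sum_{m:\,n\in\mc{T}_m}x_m^{*}$ exceeds $1$: the equality case of Shearer's lemma forces the answer of such a server to be constant given the conditioning — and hence, by letting the conditioned message vary, a function of $F$ alone — and one must verify that this degeneracy is entailed by the algebraic identities \textbf{P3}, \textbf{P4} (together with \textbf{P1}, \textbf{P2} from Lemma~\ref{lemma:Ctian}), uniformly over all query realisations of positive probability and all permutations $\pi$. Carefully transferring the information-theoretic ``conditional independence / constancy'' statements in (i)--(ii) into the stated row-deletion identities on the coding matrices, uniformly over all positive-probability query realisations, is where the technical weight of the lemma lies; the rest is bookkeeping with the chain rule and the deterministic structure of the answers.
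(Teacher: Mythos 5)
Your proof is correct and follows essentially the same route as the paper's proof in Appendix~\ref{proof:thm1}: the chain you build — correctness plus chain rule to turn the right-hand side into $H(\allanswers^{[k]}\mid W_{[0:k-1]},F)$, the conditional fractional Shearer step weighted by an LP-optimal $\mathbf{x}^*$, the privacy swap via the Markov chain $(A,\allfiles)-Q-F$, and the sub-tuple data-processing step — is exactly the paper's chain from \eqref{eq:3prime} through \eqref{eq:7}, merely traversed from the right-hand side back to the left; and you extract \textbf{P3} and \textbf{P4} from the same two equality conditions at $k=1$ (sub-tuple step) and $k=K-1$ (Shearer step), extending by permutation. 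The one place you are more explicit than the paper is in recording both clauses of the equality condition in Proposition~\ref{shearers lemma} — mutual independence at covering weight exactly one \emph{and} constancy at covering weight strictly greater than one — whereas the paper's appendix invokes only the independence clause in \eqref{lemma:iterative-t-pir-proof35}; the concern you raise in your final paragraph about reconciling the weight-$>1$ constancy with \textbf{P3}/\textbf{P4} is therefore an under-specified point in the paper's own write-up (which is stated as necessary \emph{and} sufficient but only argues necessity in detail), not a gap peculiar to your proposal.
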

\begin{proof}[Proof of Lemma~\ref{lemma:iterative-arbitrary-pir}]

                    Let $\mathbf{x^*}=(x_1,\dots,x_M)$ be an optimal solution to the linear program in \eqref{LP2}. Since \eqref{LP2} has integer coefficients, it admits an optimal solution with rational entries, and hence $\mathbf{x^*}$ can be chosen to be rational~\cite[Corollary~1.3.1]{ScheinermanU97}. Then $S^*=\sum_{m=1}^Mx_i$. To establish ~\eqref{eqn:mainlemma}, without loss of generality, let us consider the identity permutation function, i.e., $\pi(k)=k$, for all $k\in[0:K-1]$. We begin by considering the following information inequalities.
					
					\begin{align}
						  & S^{*} I(W_{[k:K-1]};A_{0:N-1}^{[k-1]}|W_{[0:k-1]},F) \nonumber                                      \\
						  & \quad = \sum_{m=1}^M x_{m} I(W_{[k:K-1]};A_{0:N-1}^{[k-1]}|W_{[0:k-1]},F)                 \\
						  & \quad \geq  \sum_{m} x_{m}I(W_{[k:K-1]};A_{\mathcal{T}_{m}}^{[k-1]}|W_{[0:k-1]},F) \label{eq:3prime} \\
                        & \quad = \sum_{m=1}^M x_{m}H(A_{\mathcal{T}_{m}}^{[k-1]}|W_{[0:k-1]},F) \label{eq:2}\\
                        & \quad = \sum_{m=1}^M x_{m}H(A_{\mathcal{T}_{m}}^{[k-1]}|W_{[0:k-1]},Q^{[k-1]}_{\mathcal{T}_m})\label{eq:3}\\
                         & \quad = \sum_{m=1}^M x_{m}H(A_{\mathcal{T}_{m}}^{[k]}|W_{[0:k-1]},Q^{[k]}_{\mathcal{T}_m})\label{eq:4}\\
                          & \quad = \sum_{m=1}^M x_{m}H(A_{\mathcal{T}_{m}}^{[k]}|W_{[0:k-1]},F)\label{eq:5}\\
						  & \quad \geq H(A_{0:N-1}^{[k]}|W_{[0:k-1]},F) \label{eq:equality_for_shearer}                   \\ 
						  & \quad = I(A_{0:N-1}^{[k]};W_{k:K-1}| W_{[0:k-1]},F) \label{eq:6}                              \\
						  & \quad = I(A_{0:N-1}^{[k]}, W_{k}; W_{k:K-1}| W_{[0:k-1]},F) \label{eq:7}                      \\
						  & \quad = L \log{(|\mathcal{X}|)} + I(W_{k+1:K-1}, A_{0:N-1}^{[k]} | W_{[0:k]}, F),             
					\end{align}

					where ~\eqref{eq:2} and \eqref{eq:6} follow  because the answers are deterministic functions of the messages $\allfiles$ and the random key $F$, \eqref{eq:3} and \eqref{eq:5} follow from the Markov chain $(A_n^{[k]},\allfiles)-Q_n^{[k]}-F$, where $k$ is the index of the desired message, \eqref{eq:4} is a consequence of the privacy requirement, which ensures that $(Q_{\mathcal{T}_{m}}^{[k-1]},A_{\mathcal{T}_{m}}^{[k-1]},W_{[0:k-1]})$ and $(Q_{\mathcal{T}_{m}}^{[k]},A_{\mathcal{T}_{m}}^{[k]},W_{[0:k-1]})$ have identical probability distributions, \eqref{eq:equality_for_shearer} follows by an application of the fractional version of shearer lemma, i.e., Proposition~\ref{shearers lemma}, since $(x_1,\dots,x_M)$ forms a fractional covering with respect to $\mathcal{P}$ in view of the inequality constraints in the linear program in \eqref{LP2}, and \eqref{eq:7} follows from the correctness requirement.
					
					For ~\eqref{eqn:mainlemma} to hold with equality for any decomposable PIR scheme, the inequalities in \eqref{eq:2} and~\eqref{eq:equality_for_shearer} must hold with equality.  Specifically, for \eqref{eq:3prime} to be an equality, we must have, with $k=1$ and $\mathcal{T}_m \in \mathcal{P}$, and using the notation $G_{\mathcal{T}|0}^{(q_{\mathcal{T}})}=(G_{i|0}^{(q_i)}:i\in\mathcal{T})$ and $G_{[0:N-1]|0}^{(\allqueriesinstantiation)}=(G_{i|0}^{(q_i)}:i\in[0:N-1])$, 
					
					\begin{align}
						0 & =I(W_{[1:K-1]};\allanswers^{[0]} \mid W_{0},F)   \\                                                                   
						&\hspace{12pt}-I(W_{[1:K-1]};A_{\mathcal{T}_{m}}^{[0]} \mid W_{0},F)\nonumber\\
						  & =H(\allanswers^{[0]} \mid W_{0},F)                                                                                  
						-H(A_{\mathcal{T}_{m}}^{[0]} \mid W_{0},F)
						\label{iterative-arbitrary_pir-proof11}\\
						  & =H(\allanswers^{[0]} \mid A_{\mathcal{T}_{m}}^{[0]},W_{0},F)                                                        
						\label{iterative-arbitrary_pir-proof32}\\
						  & =H(\allanswers^{[0]} \mid A_{\mathcal{T}_{m}}^{[0]},W_{0},F,\allqueries^{[0]})                                      
						\label{iterative-arbitrary_pir-proof12}\\
						  & =H(\allanswers^{[0]} \mid A_{\mathcal{T}_{m}}^{[0]},W_{0},\allqueries^{[0]})\label{iterative-arbitrary_pir-proof13} \\
						  & = \sum_{q_{0:N-1}}\Pr(Q_{0:N-1}= q_{0:N-1}) \nonumber                                                               \\ 
						  & \quad \cdot H(A_{0:N-1}^{[0]} \mid A_{\mathcal{T}_{m}}^{[0]}, W_{0},                                                
						Q_{0:N-1}=q_{0:N-1}) \label{eq:step1} \\
						  &=\sum_{\allqueriesinstantiation}P_{\allqueries^{[0]}}(\allqueriesinstantiation)\cdot\nonumber\\
      &\hspace{23pt}H(W_{[1:K-1]}\cdot G_{[0:N-1]|0}^{(\allqueriesinstantiation)}~|~W_{[1:K-1]}\cdot G_{\mathcal{T}_m|0}^{(q_{\mathcal{T}_m})},W_0,\nonumber\\
      &\hspace{130pt}\allqueries^{[0]}=\allqueriesinstantiation)\label{lemma:iterative-t-pir-proof14}\\
       &=\sum_{\allqueriesinstantiation}P_{\allqueries^{[0]}}(\allqueriesinstantiation)\cdot\nonumber\\
      &\hspace{32pt}H(W_{[1:K-1]}\cdot G_{[0:N-1]|0}^{(\allqueriesinstantiation)}~|~W_{[1:K-1]}\cdot G_{\mathcal{T}_m|0}^{(q_{\mathcal{T}_m})})\label{lemma:iterative-t-pir-proof15},
					\end{align}

					where \eqref{iterative-arbitrary_pir-proof32} follows because the answers are deterministic functions of the messages $\allfiles$ and the random key $F$, \eqref{iterative-arbitrary_pir-proof12} follows because the queries $\allqueries^{[0]}$ are deterministic functions of $F$ given that $0$ is the index of the desired message, \eqref{iterative-arbitrary_pir-proof13} follows from the Markov chain $(\allanswers^{[0]},\allfiles)-\allqueries^{[0]}-F$ , \eqref{lemma:iterative-t-pir-proof14}  utilizes the fact that the component functions $W_0\cdot G^{(q_n)}_{n|[1:K-1]}$ can be subtracted from the answers given $W_0$, \eqref{lemma:iterative-t-pir-proof15} holds because $W_0$ becomes independent of all other random variables after the corresponding component functions are eliminated from the answers, and the dependence on $\allqueriesinstantiation$ is fully absorbed into the answer function matrices $G_{n}^{(q_n)}$, $n\in[0:N-1]$. This implies that, for $\allqueriesinstantiation$ such that $P_{\allqueries}(\allqueriesinstantiation)>0$, we have
					$ H(W_{[1:K-1]}\cdot G_{[0:N-1]|0}^{(\allqueriesinstantiation)}|W_{[1:K-1]}\cdot G_{\mathcal{T}_{m}|0}^{(q_{\mathcal{T}_{m}})})=0$,
					for all $\mathcal{T}_{m}\in \mathcal{P}$. This, in turn, implies that, for any $\mathcal{T}_m\in \mathcal{P}$, $W_{[1:K-1]}\cdot G_{\mathcal{T}|0}^{(q_{\mathcal{T}})}$ can determine $W_{[1:K-1]}\cdot G_{n}^{(q_n)}$, for all $n\in[0:N-1]$. Since the entire argument above can be repeated by interchanging the message index $0$ with an arbitrary $k\in[0:K-1]$, using the appropriate permutation of $[0:K-1]$, we obtain the property {\bf{P3}}.
					
					For \eqref{eq:equality_for_shearer} to hold with equality, the equality conditions in the fractional version of Shearer's lemma, i.e., Proposition~\ref{shearers lemma}, imply that the random variables $A_0^{[k]},A_1^{[k]},\dots,A_{N-1}^{[k]}$ are conditionally independent given $(W_{[0:k-1]},F)$, for $k\in[1:K-1]$. Equivalently, for $k=K-1$, 
					\begin{align}
						0 & =\left(\sum_{n=0}^{N-1}H(A_n^{[K-1]} \mid W_{[0:K-2]},F)\right)\nonumber                                               \\
						  & \hspace{2.3cm}-H(\allanswers^{[K-1]} \mid W_{[0:K-2]},F)\label{lemma:iterative-t-pir-proof35}                          \\
						  & =\sum_{n=0}^{N-1}H(A_n^{[K-1]} \mid W_{[0:K-2]},\allqueries^{[K-1]})\nonumber                                          \\
						  & \hspace{12pt}-H(\allanswers^{[K-1]} \mid W_{[0:K-2]},\allqueries^{[K-1]})\label{lemma:iterative-arbitrary-pir-proof31} \\
						  & =\sum_{\allqueriesinstantiation}                                                                                       
						P_{\allqueries}(\allqueriesinstantiation)\nonumber\\
						  & \hspace{12pt}\Bigg(                                                                                                    
						\sum_{n=0}^{N-1}
						H(A_n^{[K-1]} 
						\mid W_{[0:K-2]},\allqueries^{[K-1]}=\allqueriesinstantiation)\nonumber\\
						  & \hspace{12pt}                                                                                                          
						-H(\allanswers^{[K-1]} 
						\mid W_{[0:K-2]},\allqueries^{[K-1]}=\allqueriesinstantiation)
						\Bigg)\\
						  & =\sum_{\allqueriesinstantiation}                                                                                       
						P_{\allqueries^{[K-1]}}(\allqueriesinstantiation)
						\Big(\sum_{n=0}^{N-1}
						H(W_{K-1}\cdot G_{n\mid[0:K-2]}^{(q_n)})\nonumber\\
						  & \hspace{35pt}                                                                                                          
						-H(W_{K-1} \cdot 
						G_{[0:N-1]\mid[0:K-2]}^{(\allqueriesinstantiation)})
						\Big)\label{lemma:iterative-t-pir-proof33},
					\end{align}

					where \eqref{lemma:iterative-arbitrary-pir-proof31} follows for similar reasons as the steps from 
					\eqref{iterative-arbitrary_pir-proof12} through \eqref{iterative-arbitrary_pir-proof13} follow, and 
					\eqref{lemma:iterative-t-pir-proof33} follows analogously to the steps from \eqref{eq:step1} through \eqref{lemma:iterative-t-pir-proof15}. 
					This implies that the random variables 
					$W_{K-1}\cdot G_{n\mid[0:K-2]}^{(q_n)}$, for $n\in[0:N-1]$, are independent.
					
					Since the above argument beginning at \eqref{lemma:iterative-t-pir-proof35} can be repeated by interchanging the message index $K-1$ with any arbitrary $k\in[0:K-1]$, using the appropriate permutation of $[0:K-1]$, we obtain the property \textbf{P4}. This completes the proof of Lemma~\ref{lemma:iterative-arbitrary-pir}. 
      \end{proof}
					
					Using the inequalities \eqref{eqn:tianlemma} and \eqref{eqn:mainlemma} in Lemmas~\ref{lemma:Ctian} and \ref{lemma:iterative-arbitrary-pir}, respectively, the converse proof of PIR under arbitrary collusion can be completed as follows. Starting from Lemma~\ref{lemma:Ctian} with $k=1$ and applying \eqref{eqn:mainlemma} from Lemma~\ref{lemma:iterative-arbitrary-pir} iteratively for $k=2$ to $k=K-1$, we have:
					\begin{align}
  & L\left(\frac{1}{R}-1\right)\log{|\mathcal{X}|}
    \geq I(W_{[1:K-1]};\allanswers^{[0]}|W_0,F)\nonumber \\
  & \geq \frac{1}{S^{*}}I(W_{[2:K-1]};\allanswers^{[1]}|W_{[0:1]},F)
    +\frac{1}{S^{*}}L\log{|\mathcal{X}|} \\
  & \geq \frac{1}{S^{*}}\Bigg(\frac{1}{S^{*}}I(W_{[3:K-1]};
    \allanswers^{[2]}|W_{[0:2]},F)\nonumber \\
  & \hspace{12pt}+\frac{1}{S^{*}}L\log{|\mathcal{X}|}\Bigg)
    +\frac{1}{S^{*}}L\log{|\mathcal{X}|} \\
  & =\left(\frac{1}{S^{*}}\right)^2
    I(W_{[3:K-1]};\allanswers^{[2]}|W_{[0:2]},F)\nonumber \\
  & \hspace{12pt}
    +\left(\frac{1}{S^{*}}+\left(\frac{1}{S^{*}}\right)^2\right)
    L\log{|\mathcal{X}|} \\
  & \geq \cdots \nonumber \\
  & \geq \left(\frac{1}{S^{*}}\right)^{K-2}
    I(W_{K-1};\allanswers^{[K-2]}|W_{[0:K-2]},F)\nonumber \\
  & \hspace{12pt}
    +\left(\frac{1}{S^{*}}+\left(\frac{1}{S^{*}}\right)^2+\cdots
    +\left(\frac{1}{S^{*}}\right)^{K-2}\right)
    L\log{|\mathcal{X}|}\label{eqn:dummy} \\
  & \geq \left(\frac{1}{S^{*}}+\left(\frac{1}{S^{*}}\right)^2+\cdots
    +\left(\frac{1}{S^{*}}\right)^{K-1}\right)
    L\log{|\mathcal{X}|}\label{eqn:capproof1},
\end{align}
					which implies that $R\geq C_{\mathcal{P}}$, defined in \eqref{eqn:capacity-arbitrary}.
					
					For any decomposable PIR scheme under collusion pattern $\mathcal{P}$ that achieves capacity, i.e., $R=C_{\mathcal{P}}$, all the inequalities in Lemmas~\ref{lemma:Ctian} and \ref{lemma:iterative-arbitrary-pir} must hold as equalities. According to the necessary and sufficient conditions in these lemmas, such codes must satisfy the properties {\bf{P1}}-{\bf{P4}}.

                    		\section{Proof of Theorem~\ref{thm:minimum-message size}}\label{appendix:Thm2proof}
					Consider a capacity-achieving uniformly decomposable PIR scheme under the collusion pattern $\mathcal{P} = \{\mathcal{T}_1, \mathcal{T}_2, \ldots, \mathcal{T}_M\}$, and let $k$ denote the index of the desired message. Recall Property {\bf{P3}}, which establishes that the residual terms $R_n^{(q_n)}=W_{[0:K-1]\setminus\{k\}}\cdot G_{n|k}^{(q_n)}$ for $n\in[0:N-1]$ are deterministic functions of $R_{\mathcal{T}_m}^{(q_{\mc{T}_m})}=(R_i^{(q_i)}: i\in\mc{T}_m)$, for each $\mathcal{T}_m\in\mathcal{P}$. 

We first claim that there exists a query set $\allqueriesinstantiation$ with $P_{\allqueries^{[k]}}(\allqueriesinstantiation)>0$ such that
\begin{align}\label{eqn:proof:Thm2proofquerset}
    H(W_{[0:K-1]\setminus\{k\}}\cdot G_{\mathcal{T}_m|k}^{(q_{\mathcal{T}_m})})\neq 0, 
\end{align}
for all $\mathcal{T}_m\in \mc{P}$, where $G_{\mathcal{T}_m|k}^{(q_{\mathcal{T}_m})}=(G_{i|k}^{(q_i)}:i\in\mathcal{T}_m)$. We prove this by contradiction. Assume that for every query set $\allqueriesinstantiation$ with $P_{\allqueries^{[k]}}(\allqueriesinstantiation)>0$, there exists some $\mathcal{T}_m\in \mc{P}$ such that $H(W_{[0:K-1]\setminus\{k\}}\cdot G_{\mathcal{T}_m|k}^{(q_{\mathcal{T}_{m}})})=0$. 

Since the terms $W_{[0:K-1]\setminus\{k\}}\cdot G_{n|k}^{(q_n)}$ for $n\in[0:N-1]\setminus\mathcal{T}_m$ are deterministic functions of $W_{[0:K-1]\setminus\{k\}}\cdot G_{\mathcal{T}_m|k}^{(q_{\mathcal{T}_m})}$, it follows that $H(W_{[0:K-1]\setminus\{k\}}\cdot G_{n|k}^{(q_n)})=0$ for all $n\in[0:N-1]$. Consequently, the responses from all servers must take the form
\begin{align}
    \allfiles\cdot G_n^{(q_n)}=W_k\cdot G_{n|[0:K-1]\setminus\{k\}}\oplus \tilde{\Delta}_n, 
\end{align}
for all $n\in[0:N-1]$, where $\tilde{\Delta}_n\in\mathcal{Y}$ are constants. This implies that the answers depend solely on the desired message $W_k$ and are independent of the other messages; such a structure cannot simultaneously satisfy the privacy and correctness requirements.

Next, consider a query set satisfying \eqref{eqn:proof:Thm2proofquerset} alongside property {\bf{P4}}, which states that the desired message terms $M_n^{(q_n)}\triangleq W_k\cdot G_{n|[0:K-1]\setminus\{k\}}^{(q_n)}$ for $n\in[0:N-1]$ are mutually independent. We will demonstrate that
\begin{align}
    H(M_S^{(q_S)})\neq 0,\ \text{for all}\ S\in\mathcal{F}(\mc{P})\label{eqn:msgnonzero},
\end{align}
where $M_S^{(q_S)}=(M_i^{(q_i)}:i\in S)$ and $\mc{F}(\mc{P})$ is as defined in \eqref{Family_for_hitting}. To see this, assume otherwise, i.e., suppose there exists an $S \in \cal{F}$ such that $H(M_n^{q_n})=0$, for all $n\in S$, i.e.,
\begin{align}
    H(W_k\cdot G_{n|[0:K-1]\setminus\{k\}}^{(q_n)})=0, \quad \text{for } n\in S, 
\end{align}
which implies that $W_k\cdot G_{n|[0:K-1]\setminus\{k\}}^{(q_n)}$ are constants for all $n\in S$. This further implies that the answers from the servers in $S$ are given by
\begin{align}
    A_n^{(q_n)}=\allfiles\cdot G_{n}^{(q_n)}=W_{[0:K-1]\setminus\{k\}}\cdot G_{n|k}^{(q_n)}\oplus\tilde{\Delta}_n\label{eqn:characproof1}, 
\end{align}
for $n\in S$, where $\tilde{\Delta}_n$ are constants. Let $\mathcal{T}_m\subseteq S$ denote the witness colluding set that certifies $S\in\mc{F}(\mc{P})$ by satisfying the condition in \eqref{Family_for_hitting}, i.e., 
\begin{align}
\forall i \in \mc{T}_m, \exists\, \mc{T}_r \in \mathcal{P}\ \text{s.t. } 
\mc{T}_r \subseteq S \setminus \{i\}.
\end{align}
Now, by \eqref{eqn:proof:Thm2proofquerset}, $R_{\mc{T}_m}^{(q_{\mc{T}_m})}$ contains a non-constant random variable, say, $H(R_i^{q_i})\neq 0$, for some $i\in\mc{T}_m\subseteq S$. By \eqref{eqn:characproof1}, this implies that $H(A_i^{(q_i)})\neq 0$. Since $\exists\, \mc{T}_r \in \mathcal{P}$ {s.t. } 
$\mc{T}_r \subseteq S \setminus \{i\}$, property \textbf{P3} implies that $A_i^{(q_i)}=R_i^{(q_i)}+\tilde{\Delta}_i$ is a deterministic function of $A_{\mc{T}_r}^{(q_{\mc{T}_r})}=R_{\mc{T}_r}^{(q_{\mc{T}_r})}+\tilde{\Delta}_{\mc{T}_r}$. This contradicts property \textbf{P1}, which requires that $A_n^{(q_n)}$ are mutually independent, since a non-constant random variable $A_i^{(q_i)}$ cannot be a deterministic function of $A_{\mc{T}_r}^{(q_{\mc{T}_r})}$ while remaining independent of it. This proves \eqref{eqn:msgnonzero}, i.e., $H(M_S^{(q_S)})\neq 0$, for all $S\in\mc{F}(\mc{P})$. Let the set of server indices with non-zero entropies be denoted by
\begin{align}
    B=\{n\in[0:N-1]: H(M_i^{(q_i)})\neq 0\}.
\end{align}
This implies that 
\begin{align}
    B\cap S\neq \emptyset, \ \text{for all}\ S\in\mc{F}(\mc{P}).
\end{align}
Let $H\subseteq [0:N-1]$ be a set of minimum cardinality that intersects every set in the family $\mc{F}(\mc{P})$. The corresponding hitting number is given by $\alpha(\mc{F}(\mc{P}))=|H|$. Since $B$ intersects every set in the family $\mc{F}(\mc{P})$, it follows that the number of non-zero entropies among $H(M_n^{q_n})$, $n\in[0:N-1]$ is at least $\alpha(\mc{F}(\mc{P}))$, i.e., 
\begin{align}\label{eqn:non-zerentries}
|B|\geq \alpha(\mc{F}(\mc{P})).
\end{align}

Therefore, we conclude that at least $\alpha(\mc{F}(\mc{P}))$ of the entropies of the random variables $M_n^{q_n}=W_k\cdot G_{n|[0:K-1]\setminus\{k\}}^{(q_n)}$, $n\in[0:N-1]$ are non-zero. Now, since the function $\varphi^{(q)}(n,i,k)$ induces a uniform probability distribution on $\mathcal{Y}$ (unless it takes a deterministic value) due to the uniform decomposability property, and by the independence property {\bf{P4}}, we have    
\begin{align}
    L\log{|\mathcal{X}|}&\geq H(W_k)\\
    &\geq H(W_k\cdot G^{(\allqueriesinstantiation)}_{[0:N-1]|[0:K-1]\setminus\{k\}})\nonumber\\
     &=\sum_{n=0}^{N-1}H(W_k\cdot G^{(q_n)}_{n|[0:K-1]\setminus\{k\}})\\
        &=\sum_{n=0}^{N-1}H(\{\varphi^{(q_n)}_{(n,i,k)}(W_k)):\forall i\in[0:\ell_n(q_n)-1]\})\nonumber \\
    &\geq \alpha(\mc{F}(\mc{P}))\log{|\mathcal{Y}|},\nonumber
\end{align}
This proves \eqref{eqn:messbound}. Noting that $|{\cal Y}|\geq 2$ for any meaningful protocol, $\eqref{eqn:messbound1}$ follows immediately. This completes the proof.

 \section{Proof of Corollary~\ref{corollory_1}}\label{appndx:cor1proof}
					\begin{proof}
							For $T$-collusion, i.e., $\mc{P}=\{\mathcal{T}\subseteq [0:N-1]: |S|=T\}$, we have
							\begin{align}\label{eqn:cor1proof1}
								\mathcal{F}(\mc{P}) = \{ S \subseteq [0:N-1] : |S| \geq T+1 \}.
							\end{align}
							We can see this as follows. If $|S|\geq T+1$, then after removing any $i\in\mathcal{T}\subseteq S$, the set $S\setminus \{i\}$ still contains at least $T$ elements left to form another colluding set. If $|S|=T$, then after removing any $i\in S$, the remaining set has size $T-1$, and thus the condition for $S\in\mc{F}(\mc{P})$ fails.
							
							We first show that $\alpha(\mc{F}(\mc{P}))\geq N-T$. Suppose, for the sake of contradiction, that there exists a hitting set $\mathcal{H}\subseteq [0:N-1]$ with $|\mathcal{H}|<N-T$. Then $S=[0:N-1]\setminus \mathcal{H}$ has cardinality at least $T+1$. Since $|S|\geq T+1$, it follows that $S\in\mc{F}(\mc{P})$. However, by construction, $S\cap \mathcal{H}=\emptyset$, which contradicts the assumption that $\mc{H}$ is a hitting set. Hence, every hitting set must have cardinality at least $N-T$, and therefore $\alpha(\mc{F}(\mc{P}))\geq N-T$. 
							
							On the other hand, consider any subset $\mathcal{H} \subseteq [0:N-1]$ of size $N-T$. For any $S\in\mc{F}(\mc{P})$, we have $|S|\geq T+1$. Therefore, 
                            \begin{align}
                                S\cap \mathcal{H}\neq \emptyset,\ \text{for all}\ S\in\mc{F}(\mc{P}),
                            \end{align}
                            which shows that $\mathcal{H}$ is a valid hitting set of size $N-T$. This implies that $\alpha(\mc{F}(\mc{P}))\leq N-T$.
                            
                            Combining the two bounds, we conclude that $\alpha(\mc{F}(\mc{P}))= N-T$. 
						\end{proof}

  \section{Proof of Corollary~\ref{corollary2}}\label{appdnx:cor2proof}
                    \begin{proof}
							Let $\{P_1,P_2,\dots,P_M\}$ be a partition of the set $[0:N-1]$ such that $|P_i|=N_i$, $i\in[1:M]$.  The family $\mc{F}(\mc{P})$ is the union of the following two types of sets.
							
							\begin{enumerate}
								\item \textbf{Type 1 Sets:} Sets that contain at least $T_i+1$ elements from a single set $P_i$:
								      \begin{align}
								      	\mathcal{F}_1 = \bigcup_{i=1}^{M} \left\{ S \subseteq [0:N-1] : |S\cap P_i| \geq T_i + 1 \right\}.
								      \end{align}
								      
								\item \textbf{Type 2 Sets:} Sets that contain at least $T_i$ elements from $P_i$ and at least $T_j$ elements from $P_j$, for some $i\neq j$: 
								      \begin{align}
								      	\mathcal{F}_2
								      	  & = \bigcup_{j \neq k} 
								      	\left\{
								      	S\subseteq [0:N-1]:
								      	\substack{
								      	 |S\cap P_j|\geq T_j, \ |S\cap P_k|\geq T_k
								      	}
								      	\right\}
								      	\label{disjoint_fmaily}
								      \end{align}
							\end{enumerate}
							The family $\mc{F}(\mc{P})=\mc{F}_1\cup \mc{F}_2$.

							First, we establish the lower bound $\alpha(\mc{F}(\mc{P})) \ge \sum_{j=1}^{J} (N_j - T_j) + (J - 1)$.
							To hit every Type~1 set associated with $P_j$, any hitting set $\mathcal{H}$ must contain at least at least $N_j - T_j$ elements from $P_j$; otherwise there would exist a subset of $P_j$ of size $T_j+1$ disjoint from $\mc{H}$, violating the hitting condition. Moreover, $\mc{H}$ can select exactly this minimum number $N_j-T_j$ of elements from at most one group. Indeed, if $\mc{H}$ were to select exactly $N_i-T_i$ elements from $P_i$, $i\in\{j,k\}$, $j\neq k$, then the corresponding sets of unselected elements $U_j\subseteq P_j$ and $U_k\subseteq P_k$ would satisfy $|U_j|=T_j$ and $|U_k|=T_k$. Consequently, the union $U_j\cup U_k$ would form a Type~2 set that is disjoint from $\mc{H}$, contradicting the hitting set requirement. Therefore, $\alpha(\mc{F}(\mc{P})) \ge \sum_{j=1}^{M} (N_j - T_j) + (M - 1)$.
							
							On the other hand, consider a specific subset $\mathcal{H}$ constructed as follows: select exactly $N_1 - T_1$ elements from  $P_1$, and select exactly $N_j - T_j + 1$ elements from every other $P_j$, $j > 1$. By construction,
                            \begin{align}
                                |\mathcal{H}|=\sum_{j=1}^M (N_j-T_j)+(M-1).
                            \end{align}
							We now verify that $\mathcal{H}$ is a valid hitting set. $\mathcal{H}$ intersects every Type~1 set, since $\mathcal{H}$ contains at least $N_j-T_j$ elements from each $P_j$, $j\in[1:M]$, and any subset of $P_j$ of cardinality $T_j+1$ must intersect $\mathcal{H}$. 

                            Consider a Type-2 set $S$ that has at least $T_j$ elements from $P_j$ and at least $T_k$ elements from $P_k$, $j\neq k$. If $j\neq 1$ and $k\neq 1$, clearly $\mc{H}$ intersects $S$ as $\mc{H}$ has exactly $N_i - T_i + 1$ elements from $P_i$, for $i\in\{j,k\}$. If $j=1$, then $\mc{H}$ contains exactly $N_1-T_1$ elements from $P_1$. However, since $\mc{H}$ contains exactly $N_k-T_k+1$ elements from $P_k$, any $T_k$-size subset of $P_k$ intersects $\mc{H}$. Thus, $\mc{H}$ intersects every Type-2 set involving $P_1$.  This implies that $\alpha(\mc{F}(\mc{P})) \leq \sum_{j=1}^{J} (N_j - T_j) + (J - 1)$.

							Combining both the bounds, we conclude that $|\mathcal{H}| = \sum_{j=1}^{J} (N_j - T_j) + (J - 1)$.

						\end{proof}

                           \section{Proof of Theorem~\ref{thm:T-contiguous}}\label{appndx:proofofthm3}
	For cyclically $T$-contiguous collusion., i.e., $\mc{P}=\{\{0,1,\dots,T-1\}, \{1,\dots,T\}, \dots, \{N-1,0,\dots,T-2\}\}$, we first show that 
    \begin{align}
    \alpha(\mc{F}(\mc{P}))=\left\lceil\frac{N}{T}\right\rceil-1,
    \end{align}
    and then present a capacity-achieving PIR scheme with message size $\frac{N}{T}-1$ when $N$ divides $T$. 

    If $N<2T$, then $\mc{F}(\mc{P})=\{[0:N-1]\}$, and $\alpha(\mc{F}(\mc{P}))=1=\left\lceil\frac{N}{T}\right\rceil-1$ when $N>T$. So, we assume that $N\geq 2T$ for the rest of the proof.

    The family $\mc{F}(\mc{P})$ is given by
    \begin{align}\label{eqn:family}
        \mc{F}(\mc{P})&=\{S\subseteq [0:N-1]:\exists \mc{T}_1,\mc{T}_2\in\mc{P}, \ \mc{T}_1\cup \mc{T}_2\subseteq S, \nonumber\\
        &\hspace{12pt}\mc{T}_1\cap \mc{T}_2=\emptyset\}.
    \end{align}
To see this, clearly, any $S\subseteq \mc{F}(\mc{P})$ satisfies the requirement in \eqref{Family_for_hitting}. Moreover, for an arbitrary $S$ satisfying the requirement in \eqref{Family_for_hitting}, let $\mathcal{T}\in\mc{P}$ be the witness colluding set such that
\begin{align}
    \forall i \in \mc{T}, \exists\, \mc{T}_r \in \mathcal{P}\ \text{s.t. } 
\mc{T}_r \subseteq S \setminus \{i\}.
\end{align}
Let $i^*\in\mathcal{T}$ be the largest index in $\mc{T}$. The set $S\setminus \{i^*\}$ should contain another colluding set which is disjoint from $\mc{T}$ because $N>2T$. 
Thus, $\mc{F}(\mc{P})$ is characterized as \eqref{eqn:family}.

\underline{$\alpha(\mc{F}(\mc{P}))\geq \left\lceil\frac{N}{T}\right\rceil-1$:}  We partition the set $[0:N-1]$ into $r=\left\lceil\frac{N}{T}\right\rceil$ pairwise disjoint contiguous subsets $B_1,B_2,\dots,B_r$, where $|B_i|=T$, $i\in[1:r-1]$, $|B_r|\leq T$ (as $\frac{N}{T}$ may not be an integer). Suppose, for the sake of contradiction, there exists a hitting set $\mc{H}$ with $|\mc{H}|< \left\lceil\frac{N}{T}\right\rceil-1=r-1$. Then $\mc{H}$ can intersect at most $r-2$ of the sets $B_1,\dots,B_r$. 

If $T$ divides $N$, we have $|B_r|=T$. Then it follows that there exists two distinct indices $i\neq j$ such that $H\cap B_i=\emptyset$ and $\mc{H}\cap B_j=\emptyset$. Consequently, $H\cap (B_i\cup B_j)=\emptyset$. However, since $B_i\cup B_j\in\mc{F}(\mc{P})$, this contradicts the assumption that $\mc{H}$ is a hitting set.

If $T$ does not divide $N$, we have $|B_r|<T$. If $\mc{H}\cap B_r\neq \emptyset$, then again there exists $i\neq j$ such that $\mc{H}\cap (B_i\cup B_j)=\emptyset$, which is a contradiction. So, we assume $\mc{H}\cap B_r=\emptyset$. Then $\mc{H}$ can intersect at most $r-2$ of the sets $B_1,\dots,B_{r-1}$. Let $\mc{H}\cap B_{r-1}=\emptyset$. Then $\mc{H}$ must have at least one element from each of $B_i$, $i\in[1:r-2]$ to maintain the hitting set property. Let us analyze the placement of these elements. Say, the element fro m$B_1$ is in the $i^{\text{th}}$ position in that set.  Let $|B_r|=N-(r-1)T=y=T-x$. If $i\geq x+1$, then the first $x$ elements of $B_1$ together with $B_r$ form a contiguous set of size $T$, i.e., a colluding set $\mc{T}$. Now, $\mc{H}\cap (\mc{T}\cup B_r)=\emptyset$, again violating the hitting set requirement. If instead $i<x+1$, then each element selected from $B_2,\dots,B_{r-2}$ must occur at or before the $i^{\text{th}}$ position in their respective sets; otherwise a similar contradiction arises. This constrains the overall position of selected elements. But now, the elements to the right of the chosen element in $B_{r-2}$, all the elements in $B_{r-1}$ and $B_r$, and the elements to the left of the chosen element in $B_1$ contains a contiguous set of size $2T$ because $T-t+T+y+t=2T+y$. This contains two disjoint colluding sets $\mc{T}_i$ and $\mc{T}_j$, leading to a contradiction to the hitting set requirement as $\mc{T}_i\cup \mc{T}_j\in\mc{F}(\mc{P})$. The other cases, i.e., when $\mc{H}\cap B_j=\emptyset$, $j\in[1:r-2]$, can be dealt in a similar manner. In particular, when $\mc{H}\cap B_1=\emptyset$, we can construct a contiguous set of size $2T$ not overlapping with $\mc{H}$ leading to a contradiction to the hitting set requirement. When $\mc{H}\cap B_i=\emptyset$, $i\in[2:r-2]$, we can construct a $T$-contiguous set not overlapping with $\mc{H}$, which together with $B_i$ contradicts the hitting set requirement. Therefore, $\alpha(\mc{F}(\mc{P}))\geq \left\lceil\frac{N}{T}\right\rceil-1$.  

\underline{$\alpha(\mc{F}(\mc{P}))\leq \left\lceil\frac{N}{T}\right\rceil-1$:}  Consider the partition of $[0:N-1]$ into $r=\left\lceil\frac{N}{T}\right\rceil$ pairwise disjoint contiguous subsets $B_1,B_2,\dots,B_r$, where $|B_i|=T$, $i\in[1:r-1]$, $|B_r|\leq T$ (since $\frac{N}{T}$ may not be an integer). Now, construct a set $\mc{H}$ by selecting the first element from each of the sets $B_1,\dots,B_{r-1}$, and none from $B_r$. Note that these elements are equally spaced at intervals of length $T$. We now verify that this is a hitting set. Consider an arbitrary set $S\in\mc{F}(\mc{P})$, so $\mc{T}_j\cup \mc{T}_j\subseteq S$,  where $\mc{T}_i\cap \mc{T}_j=\emptyset$, for $i\neq j$. We claim that at least one of $\mc{T}_i$ or $\mc{T}_j$ must overlap with the union $B_1\cup \dots \cup B_{r-2}\cup \{n\}$, where $n$ is the first element of $B_{r-1}$ Suppose not, both $\mc{T}_i$ and $\mc{T}_j$ are entirely contained in $(B_{r-1}\setminus \{n\})\cup B_r$, then this leads to a contradiction because $|(B_{r-1}\setminus \{n\})\cup B_r|<2T-1$. Now, since $\mc{T}_i$ or $\mc{T}_j$ overlaps with of $B_1\cup \dots \cup B_{r-2}\cup \{n\}$, clearly, $\mc{H}$ intersects $\mc{T}_j\cup \mc{T}_j$ because the elements of $\mc{H}$ are equally spaced at intervals of $T$ across the entire index set. This shows that $\alpha(\mc{F}(\mc{P}))\leq\left\lceil\frac{N}{T}\right\rceil-1$. 

Combining both the bounds, we have $\alpha(\mc{F}(\mc{P}))=\left\lceil\frac{N}{T}\right\rceil-1$.

\underline{An Achievable Scheme with Message Size $\frac{N}{T}-1$.} We present a capacity achieving PIR scheme under the cyclically $T$-contiguous collusion pattern with message size $L=\frac{N}{T}-1$, assuming $T$ divides $N$. The central idea is to identify a subset of $\frac{N}{T}$ servers such that no two servers within this subset participate in a common colluding set. To construct this, partition the server index set $[0:N-1]$ into $r=\frac{N}{T}$ pairwise disjoint contiguous subsets $B_1,B_2,\dots,B_r$, where $|B_i|=T$, $i\in[1:r]$. From each $B_i$, select its first server (i.e., with the smallest index), and construct a new set $\mathcal{N}=\{0, T,\dots, (r-1)T\}$. This gives us $|\mathcal{N}| = \frac{N}{T}$ servers that are equally spaced at intervals of $T$ across the entire index set. Due to the cyclic $T$-contiguous structure of the collusion pattern, no two servers in $\mathcal{N}$ are part of any colluding set, i.e., they are mutually non-colluding. We relabel these servers as $n' = 0, 1, \dots, \frac{N}{T} - 1$. On this reduced set of servers, we apply the achievable scheme of \cite[Section~III]{TianSC19}, which is valid as no two servers collude among $n' = 0, 1, \dots, \frac{N}{T} - 1$. The resulting scheme achieves the rate
\begin{align}\label{eqn:capctycont}
C=\left(
1 + \frac{1}{(\frac{N}{T})} + \left(\frac{1}{(\frac{N}{T})}\right)^2 + \cdots +
\left(\frac{1}{(\frac{N}{T})}\right)^{K-1}
\right)^{-1}
\end{align}
with message size $L=\frac{N}{T}-1$.
This rate is known to match exactly the capacity of PIR under cyclically $T$-contiguous collusion~\cite{SunJ18_MDS,YaoLK21}. For completeness, and more importantly, because we will require it in the proof of Theorem~\ref{them:disjointcontiguous}, we now show that that the general capacity expression in \eqref{eqn:capacity-arbitrary} reduces to the expression in \eqref{eqn:capctycont} when the collusion pattern is $T$-cyclically contiguous.

 The incidence matrix is given by a single
circulant matrix $\mathbf{B}_{\cal{P}}$, where each column
of $\mathbf{B}_{\cal{P}}$ contains exactly $T$ ones and $N-T$ zeros.

We consider the following linear program:
\begin{align}\label{eqn: LPapndx}
\max_{\mathbf{y}} \quad & \sum_{i=1}^N y_i \\
\text{s.t.} \quad & \mathbf{B}^{\mathsf T}_{\cal{P}} \mathbf{y} \leq \mathbf{1},\nonumber \\
& \mathbf{y} \geq \mathbf{0}\nonumber.
\end{align}

Due to the circulant structure of $\mathbf{B}_{\cal{P}}$, the constraints can be
written explicitly as
\begin{align}
& y_1 + y_2 + \cdots + y_T \le 1, \nonumber\\
& y_2 + y_3 + \cdots + y_{T+1} \le 1, \nonumber\\
& \hspace{2cm}\vdots \nonumber\\
& y_N + y_1 + \cdots + y_{T-1} \le 1, \label{eq:contiguous_constraints}\\
& y_i \ge 0,\quad i=1,\dots,N. \nonumber
\end{align}

Since each variable $y_i$ appears in exactly $T$ constraints in
\eqref{eq:contiguous_constraints}, the symmetric solution
$y_i = 1/T$ for all $i$ satisfies all constraints with equality and is optimal.
Consequently $S^{*}$ will be $\sum\limits_{i=0}^{N} y_i = \frac{N}{T}$. Therefore the capacity is given by the expression in \eqref{eqn:capctycont}.

                         \section{Proof of Theorem~\ref{them:disjointcontiguous}}\label{appndx:thm4proof}
For $(N_1,\dots,N_M ; T_1,\dots,T_M)$-cyclically contiguous collusion, we first show that $\alpha(\mc{F}(\mc{P}))=\sum_{i=1}^M\left\lceil\frac{N_i}{T_i}\right\rceil-1$, and then present a capacity-achieving PIR scheme with message size $\sum_{i=1}^M\frac{N_i}{T_i}-1$, when $N_i$ divides $T_i$, for $i\in[1:M]$.

 Consider a partition of the full server index set $[0:N-1]$ into $M$ disjoint groups $P_1, P_2, \dots, P_M$, where $|P_i| = N_i$, and the collusion pattern within each $P_i$ is cyclically contiguous of size $T_i$. Let $\mc{P}$ denote the set of all the colluding sets. We have
 \begin{align}\label{eqn:familydisjoint}
        \mc{F}(\mc{P})&=\{S\subseteq [0:N-1]:\exists \mc{T}_1,\mc{T}_2\in\mc{P}, \ \mc{T}_1\cup \mc{T}_2\subseteq S, \nonumber\\
        &\hspace{12pt}\mc{T}_1\cap \mc{T}_2=\emptyset\}.
    \end{align}
    Notice that $\mc{T}_1$ and $\mc{T}_2$ in \eqref{eqn:familydisjoint} can be from same group $P_j$ or can be from different groups $P_j$ and $P_k$, $j\neq k$.
			
								First, we show that $\alpha(\mc{F}(\mc{P}))\geq\sum_{i=1}^M \lceil \frac{N_i}{T_i} \rceil - 1$. For contradiction, suppose that there exists a hitting set $\mc{H}$ with $|\mc{H}|<\sum_{i=1}^M \lceil \frac{N_i}{T_i} \rceil - 1=\sum_{i=1}^M(\lceil\frac{N}{T}\rceil-1)+M-1$. By Theorem~\ref{thm:T-contiguous},  $\mc{H}$ must contain at least $\lceil\frac{N_i}{T_i}\rceil-1$ elements from each group $P_i$, so there exists at least two colluding sets $\mc{T}_1$ and $\mc{T}_2$ in different groups $P_j$ and $P_k$ such that $\mc{H}\cap (\mc{T}_1\cup \mc{T}_2)=\emptyset$, which violates the definition of a hitting set. Therefore, $\alpha(\mc{F}(\mc{P}))\geq\sum_{i=1}^M \lceil \frac{N_i}{T_i} \rceil - 1$.
								
								On the other hand, construct a set $\mc{H}$ by taking unions of the following sets.
                                \begin{itemize}
                                    \item For each $i\in[1:M]$, include a hitting set $\mc{H}_i$ for the family of sets containing unions of disjoint colluding sets entirely within in $P_i$. This can be done by selecting the first element from each of the $\lceil \frac{N_i}{T_i} \rceil - 1$ full $T_i$-sized blocks in a contiguous partition of $P_i$, omitting the last (possibly incomplete) block, following the construction used in the upper bound proof of Theorem~\ref{thm:T-contiguous}.
                                    \item Additionally, for each $i\in[2:M]$, select one element from the last block in the partition of $P_i$, specifically, the first element in that block. No such selection is made for $P_1$.
                                \end{itemize}
                                The total size of $\mc{H}$ is therefore $|\mathcal{H}|$ = $\sum_{i=1}^M (\lceil \frac{N_i}{T_i} \rceil - 1) + (M-1)$ =$\sum_{i=1}^M (\lceil \frac{N_i}{T_i} \rceil )- 1$. The set $\mc{H}$ intersects every set of the form $\mc{T}_1\cup \mc{T}_2$ with $\mc{T}_1\cap \mc{T}_2=\emptyset$. If $\mc{T}_1, \mc{T}_2$ both lie within the same group $P_i$, the set $\mc{H}_i$ already guarantees intersection due to the construction in the first bullet above. If $\mc{T}_1$ and $\mc{T}_2$ lie in different groups $P_j$ and $P_k$ (with $j \neq k$), then one of them must intersect its corresponding $\mc{H}_j$ (or $\mc{H}_k$), or the additional selected elements from the last blocks ensure intersection, due to the second bullet above.
         This implies that $\alpha(\mc{F}(\mc{P}))\leq\sum_{i=1}^M \lceil \frac{N_i}{T_i} \rceil - 1$.
                                
								Combining both the bounds, we conclude that $\alpha(\mc{F}(\mc{P}))=\sum_{i=1}^M \lceil \frac{N_i}{T_i} \rceil - 1$.

\underline{An Achievable Scheme with Optimal Message Size} 
We present a capacity-achieving PIR scheme under $(N_1,\dots,N_M ; T_1,\dots,T_M)$-cyclically contiguous collusion, assuming that each $T_i$ divides $N_i$. We identify a subset of servers from each disjoint partition such that no two selected servers belong to a common colluding set, either within the same group or across different groups. 

Specifically, consider a partition of the full server index set $[0:N-1]$ into $M$ disjoint groups $P_1, P_2, \dots, P_M$, where $|P_i| = N_i$, and the collusion pattern within each $P_i$ is cyclically contiguous of size $T_i$. For each partition $P_i$, divide it into $r_i = \frac{N_i}{T_i}$ disjoint contiguous blocks $B_{i,1}, B_{i,2}, \dots, B_{i,r_i}$, where $|B_{i,j}| = T_i$. From each $B_{i,j}$, select the first server (i.e., the one with the smallest index), and define the set
\[
\mathcal{N} = \bigcup_{i=1}^M \{ \text{first servers in } B_{i,1}, B_{i,2}, \dots, B_{i,r_i} \}.
\]
This gives a total of $\sum_{i=1}^M \frac{N_i}{T_i}$ servers in $\mathcal{N}$, one from each block, such that no two servers in $\mathcal{N}$ collude. This holds because:
\begin{itemize}
    \item within each $P_i$, the spacing of $T_i$ ensures non-collusion under the cyclic $T_i$-contiguous pattern;
    \item across different partitions, collusion is restricted within each $P_i$ by the disjoint structure.
\end{itemize}

We relabel the servers in $\mathcal{N}$ as $n' = 0, 1, \dots, \left( \sum_{i=1}^M \frac{N_i}{T_i} \right) - 1$, and apply the achievable scheme of \cite[Section~III]{TianSC19}, which is valid since no two servers collude. The resulting scheme achieves a rate

\begin{align}\label{eqn:capacitydisjointcontig}
C
=
\left(
1 + \frac{1}{S^\ast} + \left(\frac{1}{S^\ast}\right)^2 + \cdots +
\left(\frac{1}{S^\ast}\right)^{K-1}
\right)^{-1},
\end{align}
where 
\begin{align}
S^\ast
= \sum_{j=1}^{J} \frac{N_j}{T_j},
\end{align}
with a message size of $L=\sum_{i=1}^M\left\lceil\frac{N_i}{T_i}\right\rceil-1$. Using \eqref{eqn:capacity-arbitrary}, we prove below that the expression in exactly matches the capacity of PIR under $(N_1,\dots,N_M ; T_1,\dots,T_M)$-cyclically contiguous collusion.

                    The incidence matrix corresponding to the collusion pattern is
\[
\mathbf{B}_{\cal{P}} =
\begin{pmatrix}
B_1 & 0 & \cdots & 0 \\
0 & B_2 & \cdots & 0 \\
\vdots & \vdots & \ddots & \vdots \\
0 & 0 & \cdots & B_J
\end{pmatrix},
\]
where each block $B_j$ is an $N_j \times N_j$ circulant matrix, and each column of $B_j$ contains exactly $T_j$ ones and $N_j - T_j$ zeros.

The linear program corresponding to this incidence matrix decomposes into $M$ independent linear programs, each associated with a $T_j$-contiguous collusion pattern. Therefore, it suffices to solve the linear program for a single $T_j$-contiguous collusion pattern. This subproblem corresponds exactly to the linear program described in \eqref{eqn: LPapndx} and \eqref{eq:contiguous_constraints}.

Using the solution of the linear program from the proof of Theorem~\ref{thm:T-contiguous}, it is straightforward to see that the optimal solution to the for the current linear program is
\[
\mathbf{y}^\ast =
\big(
\underbrace{\tfrac{1}{T_1},\ldots,\tfrac{1}{T_1}}_{N_1},
\underbrace{\tfrac{1}{T_2},\ldots,\tfrac{1}{T_2}}_{N_2},
\ldots,
\underbrace{\tfrac{1}{T_J},\ldots,\tfrac{1}{T_J}}_{N_J}
\big)^{\mathsf T}.
\]

The corresponding optimal value is
\[S^{*}=
\sum_{i=1}^{N} y_i^\ast
= \sum_{j=1}^{M} \frac{N_j}{T_j}.
\]

Therefore, the capacity is given by the expression in \eqref{eqn:capacitydisjointcontig} with $S^*=\sum_{j=1}^{M} \frac{N_j}{T_j}$.

                    \section{Missing Details in Remark~\ref{remark2}}\label{recovering_rank_equations}

Noting that \cite[Equations (119) and (117)]{miki2025necessary} correspond exactly to the equalities in \eqref{eq:3prime} and \eqref{eq:equality_for_shearer}, specialized to the $T$-collusion pattern for linear PIR schemes, the properties stated in Theorem~\ref{thm:characterizing-properties} recover \eqref{eqn:tcolrec1} and \eqref{eqn:tcolrec2} by proceeding with the same remaining steps as in \cite[Proof of Theorem~5]{miki2025necessary}.

\section{Missing Details in Remarks~\ref{remark5} and \ref{remark7}}\label{apndx:messagesize-contiguous}
In \cite[Appendix~A]{YaoLK21}, the authors provide a capacity-achieving scheme with message size $L$ chosen such that $(\frac{1}{S})^K(S-1)^{k-1}Ly_n,\text{is an integer }\forall k\in[1:K],n\in[1:N]$, under arbitrary collusion pattern $\mc{P}$, where $\mathbf{y}^*=(y_1,\dots,y_n)$ is an optimal solution to the linear program in \eqref{eq:LP1}.
\subsection{For Cyclically $T$-Contiguous Collusion}
For cyclically $T$-contoiguous collusion, the solution to the linear program in \eqref{eq:LP1} is given by $y_{n}= \frac{1}{T}$, $n\in[0:N-1]$, with the optimal value $S^*= \frac{N}{T}$.
To ensure that $(\frac{1}{S})^K(S-1)^{k-1}Ly_n$ is an integer for all $k$ and $n$, it suffices to choose $L=N^{K}$ in the general case.

\subsection{For $(N_1,\dots,N_M ; T_1,\dots,T_M)$-cyclically contiguous collusion}
The solution to the linear program in \eqref{eq:LP1} for this case is 
\[
\mathbf{y}^\ast =
\big(
\underbrace{\tfrac{1}{T_1},\ldots,\tfrac{1}{T_1}}_{N_1},
\underbrace{\tfrac{1}{T_2},\ldots,\tfrac{1}{T_2}}_{N_2},
\ldots,
\underbrace{\tfrac{1}{T_J},\ldots,\tfrac{1}{T_J}}_{N_J}
\big)^{\mathsf T},
\]
with the optimal value $S = \sum \limits_{j=1}^{M} \frac{N_{j}}{T_{j}}$.
To ensure $(\frac{1}{S})^K(S-1)^{k-1}Ly_n$ is integer for all $k$ and $n$, it suffices to choose 
\begin{equation}
L
=
\left(
\sum_{i=1}^{M}
\frac{ \left( \prod_{j=1}^{M} T_j \right) N_i }{ T_i }
\right)^{K} \nonumber
\end{equation}

in the general case.

\fi
					
					\bibliographystyle{IEEEtran}
					\bibliography{bibliofile}

\begin{thebibliography}{10}
\providecommand{\url}[1]{#1}
\csname url@samestyle\endcsname
\providecommand{\newblock}{\relax}
\providecommand{\bibinfo}[2]{#2}
\providecommand{\BIBentrySTDinterwordspacing}{\spaceskip=0pt\relax}
\providecommand{\BIBentryALTinterwordstretchfactor}{4}
\providecommand{\BIBentryALTinterwordspacing}{\spaceskip=\fontdimen2\font plus
\BIBentryALTinterwordstretchfactor\fontdimen3\font minus
  \fontdimen4\font\relax}
\providecommand{\BIBforeignlanguage}[2]{{%
\expandafter\ifx\csname l@#1\endcsname\relax
\typeout{** WARNING: IEEEtran.bst: No hyphenation pattern has been}%
\typeout{** loaded for the language `#1'. Using the pattern for}%
\typeout{** the default language instead.}%
\else
\language=\csname l@#1\endcsname
\fi
#2}}
\providecommand{\BIBdecl}{\relax}
\BIBdecl

\bibitem{SunJ17}
H.~Sun and S.~A. Jafar, ``The capacity of private information retrieval,''
  \emph{IEEE Transactions on Information Theory}, vol.~63, no.~7, pp.
  4075--4088, 2017.

\bibitem{SunJ18}
------, ``The capacity of robust private information retrieval with colluding
  databases,'' \emph{IEEE Transactions on Information Theory}, vol.~64, no.~4,
  pp. 2361--2370, 2018.

\bibitem{YaoLK21}
X.~Yao, N.~Liu, and W.~Kang, ``The capacity of private information retrieval
  under arbitrary collusion patterns for replicated databases,'' \emph{IEEE
  Transactions on Information Theory}, vol.~67, no.~10, pp. 6841--6855, 2021.

\bibitem{sjARBITRARYmessagelengthNK-1}
H.~Sun and S.~A. Jafar, ``Optimal download cost of private information
  retrieval for arbitrary message length,'' \emph{IEEE Transactions on
  Information Forensics and Security}, vol.~12, no.~12, pp. 2920--2932, 2017.

\bibitem{zhang_TIT2019_optimalSubpack_LinearPIRCollusion}
Z.~Zhang and J.~Xu, ``The optimal sub-packetization of linear
  capacity-achieving {PIR} schemes with colluding servers,'' \emph{IEEE
  Transactions on Information Theory}, vol.~65, no.~5, pp. 2723--2735, 2019.

\bibitem{XuZhang_2018_ISIT_OptimalSubpackO(N)fieldsize_Tcoll}
J.~Xu and Z.~Zhang, ``Building capacity-achieving {PIR} schemes with optimal
  sub-packetization over small fields,'' in \emph{2018 IEEE International
  Symposium on Information Theory (ISIT)}, 2018, pp. 1749--1753.

\bibitem{Xuetal_2019_CapAchievingTPIRschemeusingMDSarraycodes}
J.~Xu, Y.~Zhang, and Z.~Zhang, ``A capacity-achieving {T}-{PIR} scheme based on
  mds array codes,'' in \emph{2019 IEEE International Symposium on Information
  Theory (ISIT)}, 2019, pp. 1047--1051.

\bibitem{Xuetal_2022_TPIR_Binaryfield_SubfieldSubcodes}
J.~Xu and L.~Wang, ``Building capacity-achieving {T}-{PIR} schemes for some
  parameters over binary field via subfield sub-codes,'' \emph{IEEE
  Transactions on Communications}, vol.~70, no.~1, pp. 59--70, 2022.

\bibitem{TianSC19}
C.~Tian, H.~Sun, and J.~Chen, ``Capacity-achieving private information
  retrieval codes with optimal message size and upload cost,'' \emph{IEEE
  Transactions on Information Theory}, vol.~65, no.~11, pp. 7613--7627, 2019.

\bibitem{MikiMM24}
A.~Miki, Y.~Morishita, and T.~Matsushima, ``Necessary and sufficient conditions
  for capacity-achieving private information retrieval and its construction
  method,'' in \emph{2024 International Symposium on Information Theory and Its
  Applications (ISITA)}, 2024, pp. 395--400.

\bibitem{miki2025necessary}
A.~Miki and T.~Matsushima, ``Necessary and sufficient conditions for
  capacity-achieving private information retrieval with adversarial servers,''
  \emph{arXiv preprint arXiv:2511.06003}, 2025.

\bibitem{JiaSJ17}
Z.~Jia, H.~Sun, and S.~A. Jafar, ``The capacity of private information
  retrieval with disjoint colluding sets,'' in \emph{IEEE Global Communications
  Conference}, 2017, pp. 1--6.

\bibitem{KruglikKDW26}
S.~Kruglik, H.~M. Kiah, S.~H. Dau, and H.~Wang, ``On the optimal file size of
  capacity-achieving byzantine-resistant private information retrieval
  schemes,'' \emph{Entropy}, vol.~28, no.~1, 2026.

\bibitem{MadimanT10}
M.~Madiman and P.~Tetali, ``Information inequalities for joint distributions,
  with interpretations and applications,'' \emph{IEEE Transactions on
  Information Theory}, vol.~56, no.~6, pp. 2699--2713, 2010.

\bibitem{JakharKCP25}
G.~Jakhar, G.~R. Kurri, S.~Chillara, and V.~M. Prabhakaran, ``Fractional
  subadditivity of submodular functions: Equality conditions and their
  applications,'' in \emph{2025 IEEE International Symposium on Information
  Theory (ISIT)}, 2025, pp. 1--6.

\bibitem{Zhouetal_2019_TIT_MDS_optimalmsgsize_lowerupcost}
R.~Zhou, C.~Tian, H.~Sun, and T.~Liu, ``Capacity-achieving private information
  retrieval codes from mds-coded databases with minimum message size,''
  \emph{IEEE Transactions on Information Theory}, vol.~66, no.~8, pp.
  4904--4916, 2020.

\bibitem{Zhuetal_2020_MDS_optimalmessagesize_higher_upcost}
J.~Zhu, Q.~Yan, C.~Qi, and X.~Tang, ``A new capacity-achieving private
  information retrieval scheme with (almost) optimal file length for coded
  servers,'' \emph{IEEE Transactions on Information Forensics and Security},
  vol.~15, pp. 1248--1260, 2020.

\bibitem{BanawanU18}
K.~Banawan and S.~Ulukus, ``The capacity of private information retrieval from
  coded databases,'' \emph{IEEE Transactions on Information Theory}, vol.~64,
  no.~3, pp. 1945--1956, 2018.

\bibitem{SunJ18_MDS}
H.~Sun and S.~A. Jafar, ``Private information retrieval from mds coded data
  with colluding servers: Settling a conjecture by freij-hollanti et al.''
  \emph{IEEE Transactions on Information Theory}, vol.~64, no.~2, pp.
  1000--1022, 2018.

\bibitem{ScheinermanU97}
E.~R. Scheinerman and D.~H. Ullman, \emph{Fractional Graph Theory: A Rational
  Approach to the Theory of Graphs}.\hskip 1em plus 0.5em minus 0.4em\relax New
  York: Wiley, 1997.

\end{thebibliography}

\end{document}